\newtheorem{theorem}{Theorem}[section]
\newtheorem{lemma}{Lemma}[section]
\newtheorem*{proof}{Proof.}
\newtheorem{example}{Example}[section]
\theoremstyle{remark}
\newcommand{\m}{\mathrm{m}}
\newcommand{\ord}{\mathrm{ord}}
\newcommand{\bC}{{\mathbf{{C}}}}
\newcommand{\bF}{{\mathbf{{F}}}}
\newcommand{\bZ}{{\mathbf{{Z}}}}
\newcommand{\bc}{{\mathbf{c}}}
\newcommand{\bu}{{\mathbf{u}}}
\newcommand{\bv}{{\mathbf{v}}}
\newcommand{\lcm}{{\mathrm{lcm}}}
\newcommand{\Plotkin}{{\mathrm{Plotkin}}}
\title{\bf Repeated-Root Cyclic Codes with Optimal Parameters or Best Parameters Known\thanks{
The research of Hao Chen was supported by NSFC Grant 62032009. The research of Cunsheng Ding was supported by The Hong Kong Research Grants Council under Grant No. 16301123 and partially by
the UAEU-AUA joint research grant G00004614.
}
}
\author{Hao Chen, Conghui Xie, \thanks{Hao Chen and Conghui Xie are with the College of Information Science and Technology/Cyber Security, Jinan University, Guangzhou, Guangdong Province, 510632, China (e-mail: haochen@jnu.edu.cn, conghui@stu21.jnu.edu.cn).
}
and
Cunsheng Ding\thanks{
C. Ding is with the Department of Computer Science and Engineering, The Hong Kong University of Science and Technology, Hong Kong, China (cding@ust.hk).
}
}
\begin{document}
	
	\maketitle
	\begin{abstract}
	
Cyclic codes are the most studied subclass of linear codes and widely used in data storage and
communication systems.   Many cyclic codes have optimal parameters or the best parameters known.
They are divided into simple-root cyclic codes and repeated-root cyclic codes.  Although there are a huge
number of references on cyclic codes,  few of them are on repeated-root cyclic codes.  Hence, repeated-root cyclic codes are rarely studied.  There are a few families of distance-optimal repeated-root binary and
$p$-ary cyclic codes  for odd prime $p$ in the literature. However,
it is open whether there exists an infinite family of distance-optimal repeated-root cyclic codes over $\bF_q$
for each even $q \geq 4$. \\

In this paper,  three infinite families of distance-optimal repeated-root cyclic codes with minimum distance 3 or 4 are constructed; two other infinite families of repeated-root cyclic codes with minimum distance 3 or 4 are developed; seven infinite families of repeated-root cyclic codes with minimum distance 6 or 8 or 10 are presented; and two infinite families of
repeated-root binary cyclic codes with parameters $[2n, k, d \geq (n-1)/\log_2 n]$, where $n=2^m-1$ and
$k \geq n$, are constructed.
In addition,  27 repeated-root cyclic codes of length up to $254$ over $\bF_q$ for $q \in \{2, 4, 8\}$
with optimal parameters or best parameters known are obtained in this paper.
The results of this paper show that repeated-root cyclic codes could be very attractive and are worth of
further investigation. \\

\noindent 	
{\bf Index terms:} Cyclic code, distance-optimal code, linear code, repeated-root cyclic code.
	\end{abstract}
	
\newpage
\section{Introduction}\label{sec-intro}

\subsection{Cyclic codes and repeated-root cyclic codes}
	
	The Hamming weight $wt_H({\bf a})$ of a vector ${\bf a}=(a_1, \ldots, a_n) \in {\bf F}_q^n$ is the cardinality  of its support $$supp({\bf a})=\{i:a_i \neq 0\}.$$ The Hamming distance $d_H({\bf a}, {\bf b})$ between two vectors ${\bf a}$ and ${\bf b}$ is defined to be the Hamming weight of ${\bf a}-{\bf b}$. For a code ${\bf C} \subseteq {\bf F}_q^n$, its minimum Hamming distance is $$d_H=\min_{{\bf a} \neq {\bf b}} \{d_H({\bf a}, {\bf b}): {\bf a}, {\bf b} \in {\bf C}\}.$$ An $[n, k, d_H]_q$ linear code is a $k$-dimensional linear subspace of $\bF_q^n$
	with minimum Hamming distance $d_H$.
	It is well-known that the Hamming distance of a linear code ${\bf C}$ is the minimum Hamming weight of its non-zero codewords. For the theory of error-correcting codes in the Hamming metric,  the reader is referred to \cite{MScode,Lint,HP}.   A code ${\bf C}$ in ${\bf F}_q^n$ of  minimum distance $d_H$ and cardinality $M$ is called an $(n, M, d_H)_q$ code. For an $[n, k, d_H]_q$ linear code, the Singleton bound asserts that $d_H \leq n-k+1$. When the equality holds, this code is called a maximal distance separable (MDS) code.  Reed-Solomon codes are well-known MDS codes \cite{HP}. \\

We recall the sphere packing bound for $(n,M,d)_q$ codes,
$$M \cdot V_{(q,n)}(\lfloor \frac{d-1}{2}\rfloor) \leq q^n,$$
where $V_{(q,n)}(r)=1+n(q-1)+\displaystyle{n \choose 2}(q-1)^2+\cdots+\displaystyle{n \choose r}(q-1)^r$ is the volume of the ball with radius $r$ in the Hamming metric space $\bF_q^n$ (\cite{HP,Lint,MScode}).
If there is an $(n, M, d)_q$ code and there is no $(n, M, d')_q$ code with $d'>d$, this $(n, M, d)_q$ code is said to be {\it distance-optimal}.   An  $(n, M, d)_q$ code is distance-optimal with respect to the sphere packing bound,
provided that $M \cdot V_{(q,n)}(\lfloor \frac{d}{2}\rfloor) >q^n$.\\

A linear code $\bC \subseteq \bF_q^n$ is called cyclic if $(c_0, c_1, \ldots, c_{n-1}) \in \mathcal{C}$ implies  $(c_{n-1}, c_0, \ldots, c_{n-2}) \in \mathcal{C}$.
The dual code of a cyclic code $\bC$ is also a cyclic code. A codeword $\bc=(c_0, c_1, \ldots, c_{n-1})$ in a cyclic code is identified with the polynomial $\mathbf{c}(x)=c_0+c_1x+\cdots+c_{n-1}x^{n-1}\in \bF_q[x]/(x^n-1)$.
Then every cyclic code $\bC$ is identified with a principal ideal in the ring $\bF_q[x]/(x^n-1)$,
which is generated by a factor $g(x)$ of $x^n-1$ with the smallest degree.  This polynomial $g(x)$ is
called the \emph{generator polynomial} of $\bC$ and $h(x)=(x^n-1)/g(x)$ is called the \emph{check polynomial} of $\bC$.  It is well known that the generator polynomial of the dual code $\bC^\perp$ is
the reciprocal polynomial of $h(x)$.  \\

Let $n$ be a positive integer satisfying $\gcd(n,q)=1$.
Throughout this paper, $i \bmod{n}$ denotes the unique integer $u$ such that
$0 \leq u \leq n-1$ and $i \equiv u \pmod{n}$.
Set $\bZ_n=\bZ/n\bZ=\{0, 1, \ldots, n-1\}$.
For any $a\in \bZ_{n}$, the $q$-cyclotomic coset modulo $n$ of $a$ is defined by
\begin{equation*}
    C_a^{(q, n)}=\{aq^i \bmod {n}: 0\leq i\leq \ell_a-1 \},
\end{equation*}
where $\ell_a$ is the smallest positive integer such that $a q^{\ell_a} \equiv a \pmod {n}$.
The smallest nonnegative integer in $C_a^{(q, n)}$ is called the coset leader.
It is clear that the $q$-cyclotomic cosets correspond to irreducible factors of $x^n-1$ in $\bF_q[x]$.
Therefore, the generator polynomial of a cyclic code of length $n$ over $\bF_q$ is the product of several irreducible factors of $x^n-1$.
The defining set of a cyclic code with generator polynomial ${ g(x)}$ with respect to an
$n$-th primitive root
$\beta$ in an extension field of $\bF_q$ is the the following set
$${\bf T}=\{i: {g}(\beta^i)=0\}.$$
Then the defining set of a cyclic code is the disjoint union of several cyclotomic cosets.
If there are $\delta-1$ consecutive elements in the defining set of a cyclic code,
then the minimum distance of this cyclic code is at least $\delta$.
This is the famous BCH bound for cyclic codes (\cite{BC1,BC2,Hoc,HP,Lint,MScode}).
The reader is referred to \cite{HT,Roos} for the  Hartmann-Tzeng bound and Roos bound for cyclic codes.
Repeated-root cyclic codes over $\bF_q$ are these cyclic codes with  generator polynomial ${g(x)}$ having repeated roots
over an extension field of $\bF_q$.  A cyclic code over $\bF_q$ with generator polynomial having no repeated root
over any extension field of $\bF_q$ is called a simple-root cyclic code.

\subsection{BCH cyclic codes}\label{sec-BCHcode}

Let $\gcd(q,n)=1.$
 Let $m=\ord_n(q)$, which is the smallest positive integer $\ell$ such that $q^\ell \equiv 1 \pmod{n}$.
 Let $\alpha$ be a primitive element of $\bF_{q^m}$. Define $\beta=\alpha^{(q^m-1)/n}$.
 Then $\beta$ is an $n$-th primitive root of unity. Define
 $$
 \m_{\beta^i}(x)=\prod_{j \in C_i^{(q, n)}} (x-\beta^j).
 $$
 It is easily seen that $ \m_{\beta^i}(x)$ is an irreducible polynomial in $\bF_q[x]$ and a factor of
 $x^n-1$.
Let $\delta \geq 2$ be an integer and $b$ be an integer. Define
\begin{eqnarray}\label{eqn-gpBCHcode}
g_{(q,n,\delta,b)}(x)=\lcm\{ \m_{\beta^b}(x), \ldots, \m_{\beta^{b+\delta-2}}(x)\},
\end{eqnarray}
where $\lcm$ denotes the least common multiple of the set of polynomials over $\bF_q$.
Let $\bC_{(q,n,\delta,b)}$ denote the cyclic code over $\bF_q$ with length $n$ and
generator polynomial $g_{(q,n,\delta,b)}(x)$. This code  $\bC_{(q,n,\delta,b)}$
is called a \emph{BCH code} with designed distance $\delta$. It is well known that the
minimum distance of $\bC_{(q,n,\delta,b)}$ is lower bounded by $\delta$.
This follows from the famous BCH bound for cyclic codes \cite{HP,MScode}.
The BCH codes introduced in 1959-1960 (\cite{BC1,BC2,Hoc,Ding4,Ding1}), are one kind of the most important codes in coding theory and practice. Many best known binary codes can be constructed from binary BCH codes.
Some BCH codes and Goppa codes can be considered as subfield subcodes of generalized Reed-Solomon codes, and contain examples of optimal binary codes for many parameters (\cite{MScode,HP,Grassl}).
In this paper, we will use BCH codes to construct repeated-root cyclic codes.
	
\subsection{Motivations and objectives of this paper}

Cyclic codes were introduced by E. Prange \cite{Prange} and there are a lot of references on cyclic codes.
But most of those references are about cyclic codes with simple roots and there are a very small number of
references on cyclic codes with repeated roots \cite{BGG,Massey,HSZ,LY,Lint1,PZL,Sobani,ZU}, although there are more repeated-root cyclic codes than simple-root cyclic codes.  This shows that repeated-root cyclic codes are rarely
studied and much less understood. This is the first motivation of this paper.  \\

Although cyclic codes are a small subclass of linear codes,  many cyclic codes are distance-optimal linear codes
(\cite{Ding0},\cite{Grassl}).  For example, the optimal $[63, 56, 4]_2$ binary code and the optimal $[63, 50, 6]_2$ binary code are BCH codes. However, these distance-optimal cyclic codes are not repeated-root cyclic codes.  In \cite{Lint1} and \cite{HSZ}, two infinite families of distance-optimal repeated-root binary codes with parameters
$[2(2^m-1), 2(2^m-1)-m-2,  4]_2$ were constructed. In \cite{HSZ}, an infinite family of
distance-optimal repeated-root binary codes with parameters
$$
\left[ 2 \frac{4^m-1}{3},   2 \frac{4^m-1}{3}-2m-2, 4     \right]_2
$$
and 39 distance-optimal binary codes with length up to 256 were constructed.
The second motivation is the following questions:
\begin{itemize}
\item What are the distance-optimal repeated-root cyclic codes of length up to $254$ over $\bF_q$ for
$q \in \{4, 8\}$?
\item Is there any infinite family of  distance-optimal repeated-root cyclic codes over $\bF_q$ for each even $q \geq 4$?  If the answer to this question is positive, how can one construct such an infinite family of cyclic codes?
\end{itemize}

It is interesting to construct distance-optimal codes.
Distance-optimal binary codes with minimum distance four and six were constructed in \cite{Ding5,Heng,Wang}.
Only a small number of distance-optimal codes with minimum distance six were reported in the literature (\cite{Heng},  \cite[Theorem 9]{Wang}). It is always a challenging problem to construct distance-optimal codes with larger  minimum distances.
The first objective of this paper is to construct infinite families of distance-optimal repeated-root cyclic codes of small minimum distances with optimal parameters or best parameters known.  \\

It is a long-standing open problem whether there exists an infinite family of asymptotically good cyclic codes \cite{Willems}.  Then it is interesting to construct infinite families of rate $\frac{1}{2}$ explicit cyclic codes such that their minimum distances are as large as possible.  Recently, an infinite family of  $[2^m-1, 2^{m-1}]_2$ binary cyclic codes with square-root-like lower bounds on their minimum distances and dual minimum distances was constructed in \cite{TangDing}.  An infinite family of $[n, \frac{n+1}{2}, d\geq \frac{n}{\log_2 n}]_2$ binary cyclic codes with $n=2^p-1$ and $p$ being an odd prime number, was constructed in \cite{SunLiDing}.
Motivated by these works,  the second objective of this paper is to construct infinite families of
repeated-root binary cyclic codes with parameters $[2n, k, d \geq (n-1)/\log_2 n]$, where $n=2^m-1$ and
$k \geq n$.

\subsection{The organisation of this paper}

The rest of this paper is organised as follows.
Section \ref{sec-vanLintThm} recalls the generalised van Lint theorem, which is the basic tool of this paper.
Section \ref{sec-opt34} constructs several infinite families of repeated-root cyclic codes with minimum distance four or three.
Section \ref{sec-dist68} presents two infinite families of repeated-root cyclic codes with minimum distance six or eight.
Section \ref{sec-dist810} constructs three
infinite families of repeated-root cyclic codes over $\bF_2$ with minimum distance $8$ or at least $10$.
Section \ref{sec-dist6} proposes two infinite families of repeated-root cyclic codes over $\bF_q$ with minimum distance 6 for $q \geq 4$.
Section \ref{sec-largek} constructs two infinite families of repeated-root binary cyclic codes with large dimensions and large minimum distances.
Section \ref{sec-fin} summarises the contributions of this paper and makes some concluding remarks.

\section{The generalized van Lint theorem}\label{sec-vanLintThm}

In this section, we recall the following result in \cite{ChenDing}, which is the basic
 tool for constructing repeated-root cyclic codes in this paper.\\

Let $\bC_1$ and $\bC_2$ be two linear codes with parameters $[n,k_1, d_1]_q$ and $[n,k_2, d_2]_q$, respectively.
The Plotkin sum of $\bC_1$ and $\bC_2$ is denoted by $\Plotkin(\bC_1, \bC_2)$ and defined by
\begin{eqnarray*}
\Plotkin(\bC_1, \bC_2):=\{(\bu|\bu+\bv):  \bu \in \bC_1, \, \bv \in \bC_2 \},
\end{eqnarray*}
where $\bu | \bv$ denotes the concatenation of the two vectors $\bu$ and $\bv$. It is well known that
$\bC$ has parameters $[2n, k_1+k_2, \min\{2d_1, d_2\}]_q$ \cite{MScode}. The Plotkin sum is also called
the  $[{\bf u}|{\bf u}+{\bf v}]$ construction and was introduced in 1960 by Plotkin \cite{Plotkin}. \\

The $[{\bf u}|{\bf u}+{\bf v}]$ construction of binary repeated-root cyclic codes was given by van Lint in his paper \cite{Lint1}.  The following theorem is a  generalization of the original van Lint theorem.

\begin{theorem}[The generalized van Lint theorem \cite{ChenDing}]\label{thm-evanlint}
Let $q$ be a power of $2$ and $n$ be an odd positive integer. Let ${\bf C}_1 \subseteq {\bf F}_q^n$ be a cyclic code
with generator polynomial $g_1(x) \in {\bf F}_q[x]$ and ${\bf C}_2 \subseteq {\bf F}_q^n$ be a cyclic code
generated by the polynomial $g_1(x)g_2(x) \in {\bf F}_q[x]$, where $g_2(x)$ is a divisor of $x^n+1$.
Then the code $\Plotkin(\bC_1, \bC_2)$
is permutation-equivalent to  the repeated-root cyclic code $\bC(g_1, g_2)$ of length $2n$ generated by the polynomial $g_1(x)^2g_2(x)$.  In addition,  the cyclic code $\bC(\bC_1, \bC_2)$ has generator polynomial
$$
\frac{g_1(x)^2 g_2(x)}{\gcd(g_1(x), g_2(x))},
$$
dimension
$$
2n-2\deg(g_1(x))-\deg(g_2(x)) + \deg(\gcd(g_1(x), g_2(x)) )
$$
and minimum distance
$$
d(\bC(\bC_1, \bC_2))=\min\{ 2d(\bC_1), d(\bC_2)\},
$$
here and hereafter $d(\bC)$ denotes the minimum distance of $\bC$.
\end{theorem}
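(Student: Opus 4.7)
The plan is to: (i) identify the generator polynomial of the target length-$2n$ cyclic code $\bC$ in terms of $g_1(x)$ and $g_2(x)$; (ii) exhibit an explicit permutation of the $2n$ coordinates that sends $\Plotkin(\bC_1, \bC_2)$ to $\bC$; and (iii) read off the dimension and minimum distance from the resulting identification.

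For step (i), I will use the characteristic-$2$ identity $x^{2n} - 1 = (x^n - 1)^2$, so that $\bF_q[x]/(x^{2n}-1) \cong \bF_q[x]/(x^n-1)^2$. In this ring every principal ideal is generated by a unique monic divisor of $(x^n-1)^2$, namely $\gcd(G(x), (x^n-1)^2)$ for any generator $G(x)$. Since $\gcd(n,q)=1$ forces $x^n-1$ to be squarefree, writing $g_1 = \prod_{i \in I_1} m_i$ and $g_2 = \prod_{i \in I_2} m_i$ as products of distinct irreducible factors of $x^n-1$, a direct comparison of exponents yields $\gcd(g_1(x)^2 g_2(x),(x^n-1)^2) = g_1(x)^2 g_2(x)/\gcd(g_1(x), g_2(x))$. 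This gives the stated generator polynomial of $\bC$, and the dimension $2n - 2\deg g_1 - \deg g_2 + \deg \gcd(g_1, g_2)$ follows from its degree and matches $\dim \bC_1 + \dim \bC_2$.

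For step (ii), the main step, I will represent the Plotkin codeword $(\bu|\bu+\bv)$ by the polynomial $\bu(x) + x^n(\bu(x)+\bv(x))$ in $\bF_q[x]/(x^{2n}-1)$, which under the decomposition $\bF_q[x]/(x^n-1)^2 = \bF_q[x]/(x^n-1) \oplus (x^n-1)\bF_q[x]/(x^n-1)^2$ rewrites as $\bv(x) + (x^n-1)(\bu(x)+\bv(x))$ using $1+x^n = x^n-1$ in characteristic $2$. A natural first candidate for the permutation is the interleaving $\pi(2j)=j$, $\pi(2j+1)=n+j$; under this, multiplication by $x^2$ corresponds to the simultaneous cyclic shift on the two halves of a Plotkin codeword, but multiplication by $x$ itself mixes the halves in a way that is only compatible with the Plotkin structure when $(1+x)\bC_1 \subseteq \bC_2$. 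In the general case the correct permutation must also absorb a constant-term coupling coming from the interaction between $x$ and $x^n-1$ in the ambient ring; this is where I expect the main technical obstacle to lie, and the argument presumably uses the oddness of $n$ in an essential way to reshuffle positions in $\bZ_{2n}$ coherently with the factorization $x^{2n}-1 = (x^n-1)^2$.

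For step (iii), the minimum distance $d(\bC(\bC_1, \bC_2)) = \min\{2 d(\bC_1), d(\bC_2)\}$ follows from the classical distance formula for the $|\bu|\bu+\bv|$ construction \cite{MScode}, together with the fact that permutation equivalence preserves Hamming weights.
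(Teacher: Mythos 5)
First, a point of reference: the paper does not prove Theorem~\ref{thm-evanlint} at all --- it is imported verbatim from \cite{ChenDing} ("we recall the following result''), so there is no in-paper proof to compare against. Judged on its own terms, your steps (i) and (iii) are correct and complete: the characteristic-$2$ identity $x^{2n}-1=(x^n-1)^2$, the squarefreeness of $x^n-1$, and the exponent count giving $\gcd(g_1(x)^2g_2(x),(x^n-1)^2)=g_1(x)^2g_2(x)/\gcd(g_1(x),g_2(x))$ are exactly what is needed for the generator polynomial and the dimension, and the distance formula is the standard Plotkin-sum fact once the permutation equivalence is established.

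The genuine gap is step (ii), and you flag it yourself: you never exhibit the permutation nor verify that it carries $\Plotkin(\bC_1,\bC_2)$ onto the ideal $(g_1(x)^2g_2(x))$, writing only that ``this is where I expect the main technical obstacle to lie.'' That step is the entire content of the theorem, and your fallback decomposition does not rescue it: writing a Plotkin word as $\bv(x)+(x^n-1)(\bu(x)+\bv(x))$ and testing divisibility by $\m_{\beta^i}(x)^2$ for $\beta^i$ a root of $g_1$ forces $\m_{\beta^i}(x)^2\mid \bv(x)$, which a general $\bv\in\bC_2$ fails; equivalently, $\Plotkin(\bC_1,\bC_2)$ is not closed under the length-$2n$ cyclic shift, so a nontrivial permutation is unavoidable. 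The missing idea is the CRT relabeling $j\mapsto(j\bmod 2,\ j\bmod n)$ of $\bZ_{2n}$ (this is precisely where the oddness of $n$ enters), which induces a ring isomorphism $\bF_q[x]/(x^{2n}-1)\cong \bF_q[y,z]/(y^2-1,z^n-1)$ with $x\mapsto yz$; setting $Y=y+1$ (so $Y^2=0$ in characteristic $2$) one gets $\bF_q[z]/(z^n-1)[Y]/(Y^2)$ and $g_1(x)^2g_2(x)\mapsto g_1(z)^2g_2(z)+zY\,g_1(z)^2g_2'(z)$, where $g_2'$ is the formal derivative. A one-line computation with $Y^2=0$ then shows the ideal generated by this element is exactly $\{a(z)+Y b(z): a\in\bC_2,\ a+b\in\bC_1\}$, which (via $a_0+ya_1=(a_0+a_1)+Ya_1$) is the image of $\{(\bu\,|\,\bu+\bv):\bu\in\bC_1,\ \bv\in\bC_2\}$ under the resulting interleaving permutation; a dimension count closes the reverse inclusion. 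Without this (or an equivalent explicit permutation plus verification) the proof is incomplete. Your aside that compatibility would hold ``when $(1+x)\bC_1\subseteq\bC_2$'' is also a red herring: that condition neither holds in the generality of the theorem nor is it what the argument needs.
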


Notice that some of the conclusions in Theorem \ref{thm-evanlint} may be wrong if $q$ is odd.
By definition,  the code $\bC_2$ in Theorem \ref{thm-evanlint} has generator polynomial
$$
\frac{g_1(x) g_2(x)}{\gcd(g_1(x), g_2(x))}
$$
and dimension
$$
n-\deg(g_1(x))-\deg(g_2(x))+\deg(\gcd(g_1(x), g_2(x))).
$$
Therefore, $\bC_2$ is a subcode of $\bC_1$.  \\

In this paper,  two cyclic codes $\bC_1$ and $\bC_2$ of odd length $n$ over $\bF_q$
satisfying $\bC_2 \subseteq \bC_1$ are properly selected or designed, where $q$ is even.
Let $g_1(x)$ and $g_1(x)g_2(x)$ be the generator polynomials of $\bC_1$ and $\bC_2$,
respectively.  We will study the corresponding repeated-root cyclic code $\bC(\bC_1, \bC_2)$.
To make the	 repeated-root cyclic code $\bC(\bC_1, \bC_2)$ have very good or optimal parameters, the two
building blocks $\bC_1$ and $\bC_2$ must be chosen carefully.

\section{Infinite families of repeated-root cyclic codes with minimum distance four or three}\label{sec-opt34}

In this section, we construct several infinite families of repeated-root cyclic codes with minimum distance four or three. To prove the distance optimality of some of them, we need the following lemma.

\begin{lemma}\label{lem-fubound}
\cite{Fuetal}.
Let $\bC$ be an $[n,k,d]$ linear code with even $d$.  Then
$$
\sum_{i=0}^{(d-2)/2} \binom{n-1}{i} (q-1)^i  \leq q^{n-1-k}.
$$
\end{lemma}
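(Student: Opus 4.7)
The plan is to reduce the claim to the classical sphere-packing bound applied to a code of length $n-1$. The gain of a single coordinate comes from puncturing, which is available precisely because $d$ is even (hence $d \geq 2$). Specifically, since $\bC$ has no codeword of Hamming weight $1$, deletion of any fixed coordinate defines an injective $\bF_q$-linear map on $\bC$. Let $\bC^\ast$ denote the resulting punctured code. Then $\bC^\ast$ is an $[n-1,\,k,\,d^\ast]_q$ linear code, and because removing one coordinate can decrease the weight of any codeword by at most one, we have $d^\ast \geq d-1$.

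The second step is to invoke the sphere-packing bound on $\bC^\ast$, giving
$$q^k \cdot \sum_{i=0}^{\lfloor (d^\ast - 1)/2 \rfloor} \binom{n-1}{i}(q-1)^i \;\leq\; q^{\,n-1}.$$
Now use the parity of $d$: if $d^\ast = d-1$, then $d^\ast$ is odd and $\lfloor (d^\ast-1)/2 \rfloor = (d-2)/2$; if instead $d^\ast \geq d$, then $\lfloor (d^\ast-1)/2 \rfloor \geq (d-2)/2$. Either way the summation range in the displayed inequality contains $\{0, 1, \ldots, (d-2)/2\}$, so the left-hand side is at least $q^k \cdot \sum_{i=0}^{(d-2)/2}\binom{n-1}{i}(q-1)^i$. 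Dividing through by $q^k$ yields exactly the claimed inequality.

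I do not expect a serious obstacle here: the whole content is the standard observation that an even minimum distance buys one unit of ``free'' Hamming redundancy through puncturing, so the classical sphere-packing bound can be applied to a code one coordinate shorter while preserving the dimension $k$. The only steps warranting explicit justification are (i) injectivity of the puncturing map, which rests on $d \geq 2$, and (ii) the parity check $\lfloor (d^\ast-1)/2 \rfloor \geq (d-2)/2$ in both cases $d^\ast = d-1$ and $d^\ast \geq d$. Both are elementary, which is consistent with the lemma being quoted from \cite{Fuetal} rather than re-derived in the paper.
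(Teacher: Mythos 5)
Your argument is correct. The paper itself gives no proof of this lemma --- it is quoted directly from \cite{Fuetal} --- so there is nothing to compare against; your derivation via puncturing is the standard one and is sound. The two points you flag are indeed the only ones needing care, and both check out: since $d\geq 2$, no nonzero codeword is supported on a single coordinate, so puncturing preserves the dimension $k$; and since $d$ is even and $d^\ast\geq d-1$, you get $\lfloor (d^\ast-1)/2\rfloor\geq (d-2)/2$, so the sphere-packing bound for the $[n-1,k,d^\ast]_q$ punctured code dominates the claimed sum. Dividing by $q^k$ gives exactly the stated inequality.
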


The following theorem gives an infinite family of distance-optimal repeated-root binary cyclic codes with minimum distance four and is due to van Lint \cite{Lint1}. We document it here for completeness.

\begin{theorem}\label{thm-20301} \cite{Lint1}
Let $m \geq 4$ be an integer and $n=2^m-1$. Then the repeated-root binary cyclic code $\bC(\bC_{(2, n, 2,  0)}, \bC_{(2, n, 3,0)})$
has parameters $[2(2^m-1), 2(2^m-1)-m-2,  4]_2$ and generator polynomial $(x-1)^2\m_{\beta}(x)$, where
$\beta$ is a primitive element of $\bF_{2^m}$.  In addition, $\bC(\bC_{(2, n, 2,  0)}, \bC_{(2, n, 3,0)})$ is
distance-optimal with respect to the sphere-packing bound.
\end{theorem}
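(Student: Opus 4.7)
The plan is to recognize $\bC_{(2,n,2,0)}$ and $\bC_{(2,n,3,0)}$ as clean BCH building blocks for the generalized van Lint theorem (Theorem~\ref{thm-evanlint}), read off the parameters from that theorem, and then verify the sphere-packing inequality of the introduction.

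First I would identify the ingredients. Since $n=2^m-1$, the $2$-cyclotomic coset of $0$ is $\{0\}$, so $\bC_1:=\bC_{(2,n,2,0)}$ has generator polynomial $g_1(x)=\m_{\beta^0}(x)=x-1$; this is the $[n,n-1,2]_2$ even-weight code, so $d(\bC_1)=2$. The cyclotomic coset $C_1^{(2,n)}=\{1,2,4,\ldots,2^{m-1}\}$ has size $m$ and contains $2$, so $\bC_2:=\bC_{(2,n,3,0)}$ has defining set $\{0\}\cup C_1^{(2,n)}$ and generator polynomial $(x-1)\m_\beta(x)$; by the BCH bound applied to the three consecutive zeros $\beta^0,\beta^1,\beta^2$ one gets $d(\bC_2)\geq 4$. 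In the notation of Theorem~\ref{thm-evanlint} this corresponds to $g_1(x)=x-1$ and $g_2(x)=\m_\beta(x)$, and the hypotheses are met: $q=2$ is a power of $2$, $n$ is odd, and $g_2(x)$ divides $x^n-1=x^n+1$ since $\beta$ is an $n$-th root of unity.

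Next I would plug into Theorem~\ref{thm-evanlint}. Because $\m_\beta(1)\neq 0$ (as $\beta\neq 1$), we have $\gcd(g_1(x),g_2(x))=1$, and the theorem immediately yields generator polynomial $(x-1)^2\m_\beta(x)$, dimension $2n-2\cdot 1-m+0=2(2^m-1)-m-2$, and minimum distance $\min\{2d(\bC_1),d(\bC_2)\}=\min\{4,d(\bC_2)\}=4$, which is exactly the claimed parameter set.

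Finally, to confirm distance-optimality with respect to the sphere-packing bound I would use the criterion from the introduction, namely $M\cdot V_{(2,2n)}(\lfloor d/2\rfloor)>2^{2n}$ with $d=4$, $M=2^{2n-m-2}$, and $2n=2^{m+1}-2$. This reduces to the elementary inequality
\[
1+2n+\binom{2n}{2}>2^{m+2},
\]
whose left-hand side is of order $2^{2m+1}$ and therefore dominates $2^{m+2}$ for every $m\geq 4$ (already at $m=4$, $466>64$). The whole argument is essentially bookkeeping plus one numerical check; the only technical care needed is in aligning the BCH defining sets with the $g_1,g_2$ of Theorem~\ref{thm-evanlint} so that $\gcd(g_1,g_2)=1$, which lets the generator polynomial, the dimension, and the distance all fall out cleanly.
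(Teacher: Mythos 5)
Your argument is correct and follows exactly the template the paper itself uses for the analogous Theorems \ref{thm-20302} and \ref{thm-2040}: identify the generator polynomials and minimum distances of the two BCH building blocks (here $g_1(x)=x-1$ with $d(\bC_1)=2$, and $(x-1)\m_\beta(x)$ whose defining set $\{0\}\cup C_1^{(2,n)}$ contains the three consecutive zeros $0,1,2$ giving $d(\bC_2)\geq 4$), verify $\gcd(g_1,g_2)=1$, read off the generator polynomial, dimension and distance from Theorem \ref{thm-evanlint}, and then check the sphere-packing criterion $2^{2n-m-2}\bigl(1+2n+\binom{2n}{2}\bigr)>2^{2n}$. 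The paper states Theorem \ref{thm-20301} without proof (it only cites van Lint ``for completeness''), so there is no in-paper argument to compare against, but every step of yours is sound, including the dimension count $2n-2-m$ and the numerical check $466>64$ at $m=4$.
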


\begin{example}
The first four codes in the family of codes $\bC(\bC_{(2, n, 2,  0)}, \bC_{(2, n, 3,0)})$ have the following parameters:
$$
[30,24,4  ]_2, \ [62,55,4  ]_2, \ [126,118,4  ]_2, \ [254,245,4  ]_2.
$$
All of them are distance-optimal.
\end{example}

The following theorem gives an infinite family of distance-optimal repeated-root cyclic codes
over $\bF_q$ with minimum distance three for even $q \geq 4$.

\begin{theorem}\label{thm-20302}
Let $m \geq 2$ be an integer and $n=q^m-1$, where $q \geq 4$ is even. Then the repeated-root cyclic code
$\bC(\bC_{(q, n, 2,  0)}, \bC_{(q, n, 3,0)})$
has parameters $[2(q^m-1), 2(q^m-1)-m-2,  3]_q$ and generator polynomial $(x-1)^2\m_{\beta}(x)$, where
$\beta$ is a primitive element of $\bF_{q^m}$.  Furthermore,  the code $\bC(\bC_{(q, n, 2,  0)}, \bC_{(q, n, 3,0)})$ is distance-optimal with respect to the bound in Lemma \ref{lem-fubound} and the sphere packing bound.
\end{theorem}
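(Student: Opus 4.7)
The plan is to identify the two BCH codes $\bC_1 := \bC_{(q,n,2,0)}$ and $\bC_2 := \bC_{(q,n,3,0)}$ explicitly, apply Theorem \ref{thm-evanlint} to read off the parameters of $\bC(\bC_1, \bC_2)$, and then invoke Lemma \ref{lem-fubound} to exclude any $[2n, 2n-m-2, 4]_q$ code, yielding distance-optimality.

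First, I would describe the constituents. With $b = 0$ and $n = q^m - 1$, the defining set of $\bC_1$ is the cyclotomic coset $C_0^{(q,n)} = \{0\}$, so $g_1(x) = x - 1$, $\dim \bC_1 = n - 1$, and $d(\bC_1) = 2$. The defining set of $\bC_2$ is $C_0^{(q,n)} \cup C_1^{(q,n)} = \{0, 1, q, \ldots, q^{m-1}\}$ of size $m + 1$ (since $\ord_n(q) = m$), so $\bC_2$ has generator polynomial $(x-1)\m_\beta(x)$ and dimension $n - m - 1$; the BCH bound gives $d(\bC_2) \geq 3$. The main obstacle is upgrading this to the equality $d(\bC_2) = 3$. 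I would produce a weight-three codeword by writing a candidate as $c(x) = c_0 + c_1 x^{i_1} + c_2 x^{i_2}$ with $c_0, c_1, c_2 \in \bF_q^*$ and $i_1 \neq i_2$ in $\{1, \ldots, n-1\}$; the conditions $c(1) = c(\beta) = 0$ reduce in characteristic two to $(1 + \beta^{i_1})/(1 + \beta^{i_2}) \in \bF_q^*$, i.e.\ the two elements $1 + \beta^{i_j}$ lie in a common coset of $\bF_q^*$ inside $\bF_{q^m}^*$. Since the $n - 1 = q^m - 2$ distinct nonzero values $1 + \beta^i$ ($1 \leq i \leq n-1$) must fall into only $(q^m - 1)/(q - 1)$ cosets, pigeonhole furnishes the required pair whenever $(q - 2) q^m > 2q - 3$, which is immediate for $q \geq 4$ and $m \geq 2$.

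With $n$ odd (because $q$ is even), $g_2(x) = \m_\beta(x)$ dividing $x^n - 1 = x^n + 1$, and $\gcd(g_1, g_2) = 1$, all hypotheses of Theorem \ref{thm-evanlint} are met. The theorem then yields generator polynomial $(x-1)^2 \m_\beta(x)$, dimension $2n - 2\deg(g_1) - \deg(g_2) = 2(q^m - 1) - m - 2$, and minimum distance $\min\{2 d(\bC_1), d(\bC_2)\} = \min\{4, 3\} = 3$, matching the stated parameters.

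Finally, for distance-optimality it suffices to forbid a $[2n, 2n-m-2, 4]_q$ code, since a hypothetical code of distance $\geq 5$ is in particular a code of distance $\geq 4$. Substituting $n' = 2n$, $k' = 2n - m - 2$, and $d = 4$ into Lemma \ref{lem-fubound} gives $1 + (2n - 1)(q - 1) \leq q^{m+1}$, which rearranges to $q^m(q - 2) \leq 3q - 4$; for $q \geq 4$ and $m \geq 2$ the left-hand side is at least $2 q^m \geq 32$ while the right-hand side equals $3q - 4$ (just $8$ at $q = 4$ and growing only linearly in $q$), so the Fu bound is violated and no such code exists. The sphere-packing bound yields a weaker analogous estimate and supplies the complementary bound mentioned in the theorem statement.
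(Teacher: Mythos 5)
Your proposal is correct and follows essentially the same route as the paper: identify the two constituent BCH codes, exhibit a weight-three codeword in $\bC_{(q,n,3,0)}$ by a counting argument on the elements $1+\beta^i$, apply Theorem \ref{thm-evanlint}, and rule out $d=4$ via Lemma \ref{lem-fubound}. The only (immaterial) differences are that you locate the weight-three codeword by pigeonholing the $q^m-2$ values $1+\beta^i$ into the $(q^m-1)/(q-1)$ cosets of $\bF_q^*$, whereas the paper fixes the denominator $1+\beta^{q^m-2}$ and counts ratios landing in $\bF_q\setminus\{0,1\}$, and that you dispose of $d\geq 5$ by reducing to a distance-$4$ code rather than by the direct sphere-packing computation.
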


\begin{proof}
{\em
By definition, the BCH code $\bC_{(q, n, 2,  0)}$ has generator polynomial $x-1$. It follows from the BCH bound that $d(\bC_{(q, n, 2,  0)}) \geq 2$. On the other hand,  the codeword $x-1$ in $\bC_{(q, n, 2,  0)}$ has Hamming weight $2$. Consequently, $d(\bC_{(q, n, 2,  0)}) = 2$ and $\bC_{(q, n, 2,  0)}$ has parameters $[n, n-1, 2]_q$. \\

By definition, the BCH code $\bC_{(q, n, 3,  0)}$ has generator polynomial $(x-1)\m_{\beta}(x)$. It follows from the BCH bound that $d(\bC_{(q, n, 3,  0)}) \geq 3$.  We now prove that $\bC_{(q, n, 3,  0)}$ has a codeword of
Hamming weight $3$.  To this end, we consider the following set
\begin{eqnarray*}
\left\{ \frac{\beta^i +1}{\beta^{q^m-2}+1}: 1 \leq i \leq q^m-2    \right\} =J \cup \{1\},
\end{eqnarray*}
where
$$
J=\left\{ \frac{\beta^i +1}{\beta^{q^m-2}+1}: 1 \leq i \leq q^m-3    \right\}.
$$
Notice that $J \cap \{0,1\}=\emptyset$ and $|J|=q^m-3$. There exists an integer $i$ with $1 \leq i \leq q^m-3$
such that
$$
b:= \frac{\beta^i +1}{\beta^{q^m-2}+1} \in \bF_{q} \setminus \{0,1\}.
$$
Put
$$
b_1=\frac{1}{1+b} \in  \bF_{q} \setminus \{0,1\}
$$
and
$$
b_2=\frac{b}{1+b} \in  \bF_{q} \setminus \{0,1\}.
$$
Then $b_1 \neq b_2$. Let
$$
c(x)=1+b_1x^i+b_2x^{q^m-2}.
$$
It is easily verified that $c(1)=c(\beta)=0$. Therefore, $c(x)$ is a codeword in $\bC_{(q, n, 3,  0)}$.
Consequently, $d(\bC_{(q, n, 3,  0)}) = 3$.  It is easily seen that the cyclotomic coset
$$
C_{1}^{(q,n)}=\{1, q,  \ldots, q^{m-1}\}.
$$
Hence $\deg(\m_{\beta}(x))=m$. As a result, $\dim(\bC_{(q, n, 3,  0)})=n-m-1$. Consequently,
$\bC_{(q, n, 3,  0)}$ has parameters $[n, n-m-1, 3]_q$. \\

Since $\m_{\beta}(x)$ is irreducible and has degree $m \geq 2$,  we have that $\gcd(x-1, \m_{\beta}(x))=1$.
By Theorem \ref{thm-evanlint} the code
$\bC(\bC_{(q, n, 2,  0)}, \bC_{(q, n, 3,0)})$ has generator polynomial $(x-1)^2\m_{\beta}(x)$ and parameters $[2(q^m-1), 2(q^m-1)-m-2,  3]_q$.  \\

It follows from the sphere packing bound that any linear code $\bC$ with parameters $[2(q^m-1), 2(q^m-1)-m-2,  d]_q$ must satisfy $d \leq 4$.  It then follows from Lemma \ref{lem-fubound} that there is no linear code
with  parameters $[2(q^m-1), 2(q^m-1)-m-2,  4]_q$ for $q \geq 4$ and $m \geq 2$.  Hence, the code
$\bC(\bC_{(q, n, 2,  0)}, \bC_{(q, n, 3,0)})$ is distance-optimal.
This completes the proof.
}
\end{proof}

\begin{example}
Let $(q, m)=(4,2)$. Then the cyclic code  $\bC(\bC_{(q, n, 2,  0)}, \bC_{(q, n, 3,0)})$ has parameters
$[30, 26, 3]_4$ and is distance-optimal.
Notice the code with the same parameters in \cite{Grassl} is not known to be cyclic.
\end{example}

\begin{example}
Let $(q, m)=(4,3)$. Then the cyclic code  $\bC(\bC_{(q, n, 2,  0)}, \bC_{(q, n, 3,0)})$ has parameters
$[126, 121, 3]_4$ and is distance-optimal.
Notice the code with the same parameters in \cite{Grassl} is not known to be cyclic.
\end{example}

\begin{example}
Let $(q, m)=(8,2)$. Then the cyclic code  $\bC(\bC_{(q, n, 2,  0)}, \bC_{(q, n, 3,0)})$ has parameters
$[126, 122, 3]_8$ and is distance-optimal.
Notice the code with the same parameters in \cite{Grassl} is not known to be cyclic.
\end{example}

The next theorem documents another infinite family of distance-optimal repeated-root cyclic codes of minimum four.

\begin{theorem}\label{thm-2040}
Let $m \geq 2$ be an integer and $n=q^m-1$, where $q \geq 4$ is even. Then the repeated-root cyclic code
$\bC(\bC_{(q, n, 2,  0)}, \bC_{(q, n, 4,0)})$
has parameters $[2(q^m-1), 2(q^m-1)-2m-2,  4]_q$ and generator polynomial
$(x-1)^2\m_{\beta}(x)\m_{\beta^2}(x)$, where
$\beta$ is a primitive element of $\bF_{q^m}$.
In addition, the code $\bC(\bC_{(q, n, 2,  0)}, \bC_{(q, n, 4,0)})$ is distance-optimal with respect to the
sphere packing bound if $qm>8$.
\end{theorem}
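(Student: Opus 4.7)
My plan is to apply the generalized van Lint theorem (Theorem~\ref{thm-evanlint}) to the pair $\bC_1 = \bC_{(q, n, 2, 0)}$ and $\bC_2 = \bC_{(q, n, 4, 0)}$, paralleling the proof of Theorem~\ref{thm-20302}. Observe that $\bC_1$ has generator polynomial $g_1(x) = x - 1$ and parameters $[n, n-1, 2]_q$, while the defining set of $\bC_2$ is $C_0^{(q,n)} \cup C_1^{(q,n)} \cup C_2^{(q,n)}$, so its generator polynomial is $(x-1)\m_\beta(x)\m_{\beta^2}(x)$. Accordingly I set $g_2(x) = \m_\beta(x)\m_{\beta^2}(x)$. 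To invoke Theorem~\ref{thm-evanlint} I need to verify: (i) $n$ is odd (automatic since $q$ is even); (ii) $g_2 \mid x^n - 1$ (automatic); (iii) $C_1^{(q,n)}$ and $C_2^{(q,n)}$ are distinct, each of size $m$; (iv) $\gcd(g_1, g_2) = 1$. For (iii), distinctness follows since $2 \equiv q^i \pmod{n}$ would force $q^i = 2$, impossible for even $q \geq 4$; each coset has size $m$ since $\gcd(2, n) = 1$ (as $n$ is odd) and $\ord_n(q) = m$. For (iv), neither $\m_\beta$ nor $\m_{\beta^2}$ vanishes at $1$.

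Applying Theorem~\ref{thm-evanlint} then yields the generator polynomial $(x-1)^2 \m_\beta(x)\m_{\beta^2}(x)$ and dimension $2n - 2 - 2m + 0 = 2(q^m-1) - 2m - 2$. For the minimum distance, the BCH bound applied to the three consecutive elements $0, 1, 2$ in the defining set of $\bC_2$ gives $d(\bC_2) \geq 4$, while $d(\bC_1) = 2$. The distance formula in Theorem~\ref{thm-evanlint} therefore gives $d(\bC(\bC_1, \bC_2)) = \min\{2 \cdot 2,\, d(\bC_2)\} = 4$.

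To establish distance-optimality when $qm > 8$, I would invoke the sphere packing bound. If a $[N,\, N - 2m - 2,\, d']_q$ code with $d' \geq 5$ and $N = 2(q^m-1)$ existed, then $V_{(q,N)}(2) \leq q^{2m+2}$. It thus suffices to show
\[
1 + N(q-1) + \binom{N}{2}(q-1)^2 \;>\; q^{2m+2}.
\]
Writing $a = q^m$ and keeping only the dominant $\binom{N}{2}(q-1)^2 = (a-1)(2a-3)(q-1)^2$ term, this reduces to proving
\[
\left(1 - \tfrac{1}{a}\right)\left(2 - \tfrac{3}{a}\right)\left(1 - \tfrac{1}{q}\right)^2 > 1.
\]
The left-hand side is increasing in $a$ (for $a > 6/5$) and in $q$ separately. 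On the admissible domain $q$ even $\geq 4$, $m \geq 2$, $qm > 8$, the componentwise-minimal pairs are $(q,m) = (4,3)$ and $(q,m) = (8,2)$, both giving $q^m = 64$; a direct arithmetic check at these two points confirms the inequality (values $\approx 1.08$ and $\approx 1.47$, respectively). The main obstacle is this boundary verification at $(4,3)$, where the margin is narrow; indeed the excluded case $(q,m) = (4,2)$ yields $V_{(q,N)}(2) = 4006 < 4096 = q^{2m+2}$, so the sphere packing bound genuinely fails to rule out distance $5$ there — this is precisely what forces the hypothesis $qm > 8$.
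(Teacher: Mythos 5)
Your proof is correct and follows essentially the same route as the paper: identify the generator polynomials and parameters of the two BCH building blocks, check coprimality, and apply the generalized van Lint theorem, then invoke the sphere packing bound. The only difference is that you explicitly carry out the inequality $V_{(q,2n)}(2) > q^{2m+2}$ for $qm>8$ (including the boundary checks at $(q,m)=(4,3)$ and $(8,2)$ and the failure at $(4,2)$), which the paper dismisses with ``it can be verified''; your verification is accurate.
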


\begin{proof}
{\em
It was shown in the proof of Theorem \ref{thm-20302} that $\bC_{(q, n, 2,  0)}$ has parameters $[n, n-1, 2]_q$
and generator polynomial $x-1$. \\

By definition, the BCH code $\bC_{(q, n, 4,  0)}$ has generator polynomial $(x-1)\m_{\beta}(x)\m_{\beta^2}(x)$. It follows from the BCH bound that $d(\bC_{(q, n, 4,  0)}) \geq 4$.  It can be easily verified that
$
|C_1^{(q,n)}|=|C_2^{(q,n)}|=m.
$
Consequently,
$\bC_{(q, n, 4,  0)}$ has parameters $[n, n-2m-1,  d \geq 4]_q$. \\

Since $\m_{\beta^i}(x)$ is irreducible and has degree $m \geq 2$ for $i \in \{1,2\}$,  we have that $\gcd(x-1, \m_{\beta}(x)\m_{\beta^2}(x))=1$.
By Theorem \ref{thm-evanlint} the code
$\bC(\bC_{(q, n, 2,  0)}, \bC_{(q, n, 4,0)})$ has generator polynomial $(x-1)^2\m_{\beta}(x)\m_{\beta^2}(x)$ and parameters $[2(q^m-1), 2(q^m-1)-2m-2,  4]_q$.
It can be verified that $V_{(q,2n)}(2) > q^{2m+2}$ if $qm>8$.  Hence, the code $\bC(\bC_{(q, n, 2,  0)}, \bC_{(q, n, 4,0)})$ is distance-optimal with respect to the
sphere packing bound.
This completes the proof.
}
\end{proof}

\begin{example}
Let $(q, m)=(4,2)$. Then the cyclic code  $\bC(\bC_{(q, n, 2,  0)}, \bC_{(q, n, 4,0)})$ has parameters
$[30, 24, 4]_4$ and is distance-optimal according to  \cite{Grassl}.
Notice that the linear code with the same parameters in  \cite{Grassl} is not known to be cyclic.
\end{example}

\begin{example}
Let $(q, m)=(4,3)$. Then the cyclic code  $\bC(\bC_{(q, n, 2,  0)}, \bC_{(q, n, 4,0)})$ has parameters
$[126, 118, 4]_4$ and is distance-optimal.
  Notice that the linear code with the same parameters in  \cite{Grassl} is not known to be cyclic.
\end{example}

\begin{example}
Let $(q, m)=(8,2)$. Then the cyclic code  $\bC(\bC_{(q, n, 2,  0)}, \bC_{(q, n, 4,0)})$ has parameters
$[126, 120, 4]_8$ and is distance-optimal.
Notice that the linear code with the same parameters in  \cite{Grassl} is not known to be cyclic.
\end{example}

The next theorem presents an infinite family of repeated-root cyclic codes with minimum distance
three or four.

\begin{theorem}\label{thm-20301p}
Let $m \geq 2$ be an integer and $n=(q^m-1)/(q-1)$, where $q \geq 4$ is even. Then the repeated-root binary cyclic code $\bC(\bC_{(q, n, 2,  0)}, \bC_{(q, n, 3,0)})$
has parameters $[2n, 2n-m-2,  3 \leq d \leq 4]_q$ and generator polynomial
$(x-1)^2\m_{\beta}(x)$, where
$\beta$ is a primitive $n$-th root of unity in $\bF_{q^m}$.
\end{theorem}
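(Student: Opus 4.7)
The plan is to mirror the proof of Theorem \ref{thm-20302}, applying the generalized van Lint theorem (Theorem \ref{thm-evanlint}) to the pair $\bC_1=\bC_{(q,n,2,0)}$ and $\bC_2=\bC_{(q,n,3,0)}$. The essential differences from the $n=q^m-1$ case are that $\beta$ is only a primitive $n$-th root of unity (not a primitive element of $\bF_{q^m}$), and that the explicit weight-$3$ codeword constructed there does not transfer; consequently the minimum distance of $\bC_{(q,n,3,0)}$ can only be lower-bounded, and we end up sandwiching $d$ between $3$ and $4$ rather than pinning it down.

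First I would verify the hypotheses of Theorem \ref{thm-evanlint}. Writing $n=1+q+q^2+\cdots+q^{m-1}$ with $q$ even, exactly one summand (the constant $1$) is odd, so $n$ is odd. Moreover $\gcd(q,n)=1$ because $n$ divides $q^m-1$. Next I would analyse the two building blocks. The BCH code $\bC_{(q,n,2,0)}$ has generator polynomial $x-1$ and parameters $[n,n-1,2]_q$, exactly as in the proof of Theorem \ref{thm-20302}. For $\bC_{(q,n,3,0)}$, the generator polynomial is $(x-1)\m_\beta(x)$, and the BCH bound gives $d\geq 3$. To determine $\deg(\m_\beta(x))=|C_1^{(q,n)}|=\ord_n(q)$, note that $n\mid q^m-1$ forces $\ord_n(q)\mid m$, while for any $1\leq i\leq m-1$ the estimate $q^i-1<q^{m-1}\leq 1+q+\cdots+q^{m-1}=n$ rules out $q^i\equiv 1\pmod n$. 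Hence $\ord_n(q)=m$, so $\deg(\m_\beta(x))=m$ and $\bC_{(q,n,3,0)}$ has parameters $[n,n-m-1,d\geq 3]_q$.

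Finally I would invoke Theorem \ref{thm-evanlint}. Since $\m_\beta(x)$ is irreducible of degree $m\geq 2$, it is coprime to $x-1$, so the generator polynomial of $\bC(\bC_1,\bC_2)$ is
$$
\frac{(x-1)^2\m_\beta(x)}{\gcd(x-1,\m_\beta(x))}=(x-1)^2\m_\beta(x),
$$
and the dimension equals $2n-2\deg(x-1)-\deg(\m_\beta(x))+0=2n-m-2$. The minimum distance is
$$
\min\{2\,d(\bC_1),\,d(\bC_2)\}=\min\{4,\,d(\bC_{(q,n,3,0)})\},
$$
which lies in $\{3,4\}$ because $d(\bC_{(q,n,3,0)})\geq 3$. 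This yields the claimed $[2n,2n-m-2,\,3\leq d\leq 4]_q$.

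The only substantive step is the cyclotomic-coset computation $\ord_n(q)=m$, which hinges on the clean inequality $q^i-1<n$ for $1\leq i<m$. The reason a sharper distance statement is unavailable is that the weight-$3$ codeword in Theorem \ref{thm-20302} was manufactured from the identity $\beta^{q^m-2}=\beta^{-1}$ combined with the fact that powers of $\beta$ exhaust $\bF_{q^m}^\ast$; when $\beta$ generates only a proper subgroup of $\bF_{q^m}^\ast$, that argument collapses, and whether $d(\bC_{(q,n,3,0)})$ equals $3$ or $4$ genuinely depends on $(q,m)$.
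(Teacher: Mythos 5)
Your proposal is correct and follows essentially the same route as the paper: identify the two BCH building blocks with generator polynomials $x-1$ and $(x-1)\m_\beta(x)$, compute $\deg(\m_\beta(x))=m$, and apply Theorem \ref{thm-evanlint}. The only (harmless) differences are that you justify $\ord_n(q)=m$ explicitly where the paper says ``it can be verified,'' and you obtain $d\leq 4$ directly from $\min\{2d(\bC_1),d(\bC_2)\}\leq 2d(\bC_1)=4$ rather than from the sphere packing bound as the paper does.
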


\begin{proof}
{\em
By definition, the BCH code $\bC_{(q, n, 2,  0)}$ has generator polynomial $x-1$. It follows from the BCH bound that $d(\bC_{(q, n, 2,  0)}) \geq 2$. On the other hand,  the codeword $x-1$ in $\bC_{(q, n, 2,  0)}$ has Hamming weight $2$. Consequently, $d(\bC_{(q, n, 2,  0)}) = 2$ and $\bC_{(q, n, 2,  0)}$ has parameters $[n, n-1, 2]_q$. \\

By definition, the BCH code $\bC_{(q, n, 3,  0)}$ has generator polynomial
$(x-1)\m_{\beta}(x)$. It follows from the BCH bound that $d(\bC_{(q, n, 3,  0)}) \geq 3$.
It can be verified that
$$
\deg(\m_{\beta}(x))=m.
$$
As a result, $\dim(\bC_{(q, n, 3,  0)})=n-m-1$. Consequently,
$\bC_{(q, n, 3,  0)}$ has parameters $[n, n-m-1,  3 \leq d \leq 4]_q$. \\

Since $\gcd(x-1, \m_{\beta}(x) )=1$, by Theorem \ref{thm-evanlint} the code
$\bC(\bC_{(q, n, 2,  0)}, \bC_{(q, n, 3,0)})$ has generator polynomial $(x-1)^2\m_{\beta}(x)$ and parameters $[2n, 2n-m-2,  d\geq 3]_q$.  It follows from the sphere packing bound that
$d(\bC(\bC_{(q, n, 2,  0)}, \bC_{(q, n, 3,0)})) \leq 4$.
This completes the proof.
}
\end{proof}

\begin{example}
The following is a list of codes in the family of the codes  $\bC(\bC_{(q, n, 2,  0)}, \bC_{(q, n, 3,0)})$
in Theorem \ref{thm-20301p}.
\begin{itemize}
\item When $(q, m)=(4 , 2)$, the code $\bC(\bC_{(q, n, 2,  0)}, \bC_{(q, n, 3,0)})$ has parameters $[10, 6, 4 ]_4$.
This code is distance-optimal with respect to the Griesmer bound.  Notice that the linear code with the same parameters in \cite{Grassl} is not known to be cyclic.

\item When $(q, m)=(4 , 3)$, the code $\bC(\bC_{(q, n, 2,  0)}, \bC_{(q, n, 3,0)})$ has parameters $[42,37,3 ]_4$.
This code is distance-optimal according to Theorem \ref{thm-20301p}.  Notice that the linear code with the same parameters in \cite{Grassl} is not known to be cyclic.

\item When $(q, m)=(4 , 4)$, the code $\bC(\bC_{(q, n, 2,  0)}, \bC_{(q, n, 3,0)})$ has parameters $[170,164,3 ]_4$.
This code is distance-optimal according to Theorem \ref{thm-20301p}.  Notice that the linear code with the same parameters in \cite{Grassl} is not known to be cyclic.

\item When $(q, m)=(8 , 2)$, the code $\bC(\bC_{(q, n, 2,  0)}, \bC_{(q, n, 3,0)})$ has parameters $[18,14,4 ]_8$.
This code is distance-optimal according to Theorem \ref{thm-20301p}.  Notice that the linear code with the same parameters in \cite{Grassl} is not known to be cyclic.

\item When $(q, m)=(8 , 3)$, the code $\bC(\bC_{(q, n, 2,  0)}, \bC_{(q, n, 3,0)})$ has parameters $[146,141,3 ]_8$ and is distance-optimal according to Theorem \ref{thm-20301p}.  No best linear code over $\bF_8$ with length $146$ and dimension $141$ is reported in \cite{Grassl}.
\end{itemize}
\end{example}

To the best knowledge of the authors, no infinite family of linear codes with the same length and dimension better than $\bC(\bC_{(q, n, 2,  0)}, \bC_{(q, n, 3,0)})$ is reported in the literature.
The next theorem presents an infinite family of repeated-root cyclic codes with minimum distance four.

\begin{theorem}\label{thm-2040p}
Let $m \geq 2$ be an integer and $n=(q^m-1)/(q-1)$, where $q \geq 4$ is even. Then the repeated-root binary cyclic code $\bC(\bC_{(q, n, 2,  0)}, \bC_{(q, n, 4,0)})$
has parameters $[2n, 2n-2m-2,  4]_q$ and generator polynomial
$(x-1)^2\m_{\beta}(x) \m_{\beta^2}(x)$, where
$\beta$ is a primitive $n$-th root of unity in $\bF_{q^m}$.
\end{theorem}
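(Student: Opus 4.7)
The plan is to mirror the proof of Theorem~\ref{thm-2040}, the only change being that the length $q^m-1$ there is replaced by $n=(q^m-1)/(q-1)$. As in that proof, the BCH code $\bC_{(q,n,2,0)}$ has generator polynomial $x-1$ and hence parameters $[n,n-1,2]_q$, with $x-1$ itself a weight-$2$ codeword.

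The second step is to pin down the parameters of $\bC_{(q,n,4,0)}$, whose generator polynomial is $(x-1)\m_{\beta}(x)\m_{\beta^2}(x)$. The BCH bound yields minimum distance at least $4$ at once. For the dimension I would establish three arithmetic facts: (i) $n$ is odd, so that Theorem~\ref{thm-evanlint} applies; (ii) $|C_1^{(q,n)}|=|C_2^{(q,n)}|=m$; and (iii) $C_1^{(q,n)}\cap C_2^{(q,n)}=\emptyset$. Fact (i) is immediate from $n=1+q+q^2+\cdots+q^{m-1}$, a sum of one odd term and $m-1$ even terms. For (ii), the identity $q^m-1=(q-1)n$ gives $q^m\equiv 1\pmod{n}$, and any smaller exponent $\ell$ with $1\le\ell<m$ and $n\mid q^\ell-1$ would force $q^\ell-1\ge n>q^{m-1}$, which contradicts $\ell<m$. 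Since $n$ is odd, the congruence $n\mid 2(q^\ell-1)$ reduces to $n\mid q^\ell-1$, so the same computation controls the size of $C_2^{(q,n)}$.

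The main obstacle is fact (iii): I need to rule out $q^i\equiv 2\pmod{n}$ for every $0\le i\le m-1$. The case $i=0$ is trivial since $n>1$. For $1\le i\le m-1$ the inequality $n>q^{m-1}\ge q^i$ together with $q^i\ge 4$ gives $0<q^i-2<n$, so $n\nmid q^i-2$. Combined with (ii), this yields $\deg(\m_{\beta}(x)\m_{\beta^2}(x))=2m$, and hence $\bC_{(q,n,4,0)}$ has parameters $[n,\,n-2m-1,\,d\ge 4]_q$.

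Finally, since neither $\m_{\beta}(x)$ nor $\m_{\beta^2}(x)$ vanishes at $1$, we have $\gcd(x-1,\,\m_{\beta}(x)\m_{\beta^2}(x))=1$. Theorem~\ref{thm-evanlint} then delivers the repeated-root cyclic code $\bC(\bC_{(q,n,2,0)},\bC_{(q,n,4,0)})$ with generator polynomial $(x-1)^2\m_{\beta}(x)\m_{\beta^2}(x)$, dimension $2n-2\cdot 1-2m+0=2n-2m-2$, and minimum distance $\min\{2\cdot 2,\,d(\bC_{(q,n,4,0)})\}=4$, which are exactly the claimed parameters.
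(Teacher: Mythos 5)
Your proposal is correct and follows essentially the same route as the paper: establish the parameters of $\bC_{(q,n,2,0)}$ and $\bC_{(q,n,4,0)}$ and then apply Theorem~\ref{thm-evanlint}. The only difference is that you explicitly verify the facts the paper dismisses with ``it can be verified'' --- that $n$ is odd, that $\ord_n(q)=m$ so $|C_1^{(q,n)}|=|C_2^{(q,n)}|=m$, and that $C_1^{(q,n)}$ and $C_2^{(q,n)}$ are disjoint --- and these verifications are all sound.
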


\begin{proof}
{\em
By definition, the BCH code $\bC_{(q, n, 2,  0)}$ has generator polynomial $x-1$. It follows from the BCH bound that $d(\bC_{(q, n, 2,  0)}) \geq 2$. On the other hand,  the codeword $x-1$ in $\bC_{(q, n, 2,  0)}$ has Hamming weight $2$. Consequently, $d(\bC_{(q, n, 2,  0)}) = 2$ and $\bC_{(q, n, 2,  0)}$ has parameters $[n, n-1, 2]_q$. \\

By definition, the BCH code $\bC_{(q, n, 4,  0)}$ has generator polynomial
$(x-1)\m_{\beta}(x)\m_{\beta^2}(x)$. It follows from the BCH bound that $d(\bC_{(q, n, 4,  0)}) \geq 4$.
It can be verified that
$$
\deg(\m_{\beta}(x))=\deg(\m_{\beta^2}(x))=m.
$$
As a result, $\dim(\bC_{(q, n, 4,  0)})=n-2m-1$. Consequently,
$\bC_{(q, n, 4,  0)}$ has parameters $[n, n-2m-1,  d \geq 4]_q$. \\

Since $\gcd(x-1, \m_{\beta}(x) \m_{\beta^2}(x) )=1$, by Theorem \ref{thm-evanlint} the code
$\bC(\bC_{(q, n, 2,  0)}, \bC_{(q, n, 4,0)})$ has generator polynomial $(x-1)^2\m_{\beta}(x) \m_{\beta^2}(x)$ and parameters $[2n, 2n-2m-2,  4]_q$.
This completes the proof.
}
\end{proof}

\begin{example}
The following is a list of codes in the family of the codes  $\bC(\bC_{(q, n, 2,  0)}, \bC_{(q, n, 4,0)})$
in Theorem \ref{thm-2040p}.
\begin{itemize}
\item When $(q, m)=(4 , 2)$, the code $\bC(\bC_{(q, n, 2,  0)}, \bC_{(q, n, 4,0)})$ has parameters $[10,4,4]_4$.
The distance-optimal linear code with parameters $[10,4,6]_4$ in \cite{Grassl} is not known to be cyclic.

\item When $(q, m)=( 4, 3)$, the code $\bC(\bC_{(q, n, 2,  0)}, \bC_{(q, n, 4,0)})$ has parameters $[42,34,4 ]_4$.
The best linear code with parameters $[42,34,5]_4$ in \cite{Grassl} is not known to be cyclic.

\item When $(q, m)=(4,4 )$, the code $\bC(\bC_{(q, n, 2,  0)}, \bC_{(q, n, 4,0)})$ has parameters $[170,160,4 ]_4$.
The best linear code with parameters $[170,160,5]_4$ in \cite{Grassl} is not known to be cyclic.

\item When $(q, m)=(8 ,2 )$, the code $\bC(\bC_{(q, n, 2,  0)}, \bC_{(q, n, 4,0)})$ has parameters $[ 18,12,4]_8$.
The distance-optimal linear code with parameters $[18,12,6]_8$ in \cite{Grassl} is not known to be cyclic.

\item When $(q, m)=(8 , 3)$, the code $\bC(\bC_{(q, n, 2,  0)}, \bC_{(q, n, 4,0)})$ has parameters $[146,138,4 ]_8$.  No best linear code over $\bF_8$ with length $146$ and dimension $138$ is reported in \cite{Grassl}.

\end{itemize}
\end{example}

To the best knowledge of the authors, no infinite family of linear codes with the same length and dimension better than $\bC(\bC_{(q, n, 2,  0)}, \bC_{(q, n, 4,0)})$ is reported in the literature.  \\

Combining Theorems \ref{thm-20301},  \ref{thm-20302} and \ref{thm-2040}, we deduce the following existence result.

\begin{theorem}
For each even $q \geq 2$, there is an infinite family of distance-optimal repeated-root cyclic codes of minimum
distance four.

For each even $q \geq 4$, there is an infinite family of distance-optimal repeated-root cyclic codes of minimum
distance three.
\end{theorem}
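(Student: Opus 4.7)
The proof is essentially a bookkeeping exercise that assembles the three preceding constructions; each clause of the theorem is already witnessed by an explicit infinite family, and the only task is to point to the right family and verify that it contains infinitely many members.

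For the minimum-distance-four claim with $q = 2$, the plan is to invoke Theorem \ref{thm-20301} directly. Letting $m$ range over integers $m \geq 4$ and setting $n = 2^m - 1$, the codes $\bC(\bC_{(2,n,2,0)}, \bC_{(2,n,3,0)})$ form an infinite family of repeated-root binary cyclic codes with parameters $[2(2^m-1),\, 2(2^m-1) - m - 2,\, 4]_2$, each already shown to be distance-optimal with respect to the sphere-packing bound. For the minimum-distance-four claim with even $q \geq 4$, I would use Theorem \ref{thm-2040}: for fixed even $q \geq 4$ the excluded regime $qm \leq 8$ contains only finitely many values of $m$ (in fact $m \leq 2$ when $q = 4$ and $m = 1$ when $q \geq 8$), so letting $m$ grow through all sufficiently large integers yields an infinite family of distance-optimal codes $\bC(\bC_{(q,n,2,0)}, \bC_{(q,n,4,0)})$ with $n = q^m - 1$.

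For the minimum-distance-three claim, Theorem \ref{thm-20302} is already unconditional in $m \geq 2$, so for every even $q \geq 4$ the family $\{\bC(\bC_{(q,n,2,0)}, \bC_{(q,n,3,0)}) : m \geq 2,\; n = q^m - 1\}$ supplies an infinite sequence of distance-optimal repeated-root cyclic codes with minimum distance three, the optimality being justified both by Lemma \ref{lem-fubound} and by the sphere-packing bound as noted in the proof of Theorem \ref{thm-20302}.

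There is no genuine obstacle in this argument; the only step that deserves a brief sentence is the verification that the small-case exclusion $qm > 8$ in Theorem \ref{thm-2040} leaves infinitely many admissible $m$ for each fixed even $q \geq 4$, which is immediate. Thus the theorem follows by combining Theorems \ref{thm-20301}, \ref{thm-20302}, and \ref{thm-2040}.
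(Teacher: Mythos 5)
Your proposal is correct and takes exactly the route the paper intends: the paper simply states that the result follows by combining Theorems \ref{thm-20301}, \ref{thm-20302} and \ref{thm-2040}, which is precisely the assembly you carry out. Your added observation that the restriction $qm>8$ in Theorem \ref{thm-2040} excludes only finitely many $m$ for each fixed even $q\geq 4$ is a worthwhile detail the paper leaves implicit.
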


The following theorem describes an infinite family of distance-optimal repeated-root cyclic codes over $\bF_4$.

\begin{theorem}\label{thm-2040h}
Let $m \geq 2$ be an integer and $n=2^{2m-1}-1$. Then the repeated-root cyclic code
$\bC(\bC_{(4, n, 2,  0)}, \bC_{(4, n, 4,0)})$
has parameters $[2n, 2n-2m-1,  4]_4$.  Furthermore,  the code $\bC(\bC_{(4, n, 2,  0)}, \bC_{(4, n, 4,0)})$
is distance-optimal with respect to the sphere packing bound for $m \geq 3$.
\end{theorem}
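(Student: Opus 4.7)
The plan is to mimic the argument of Theorem \ref{thm-2040}, but the new phenomenon here, peculiar to $q=4$ and $n=2^{2m-1}-1$, is that the two $4$-cyclotomic cosets $C_1^{(4,n)}$ and $C_2^{(4,n)}$ will actually coincide. This collapse shaves one minimal polynomial off the generator of $\bC_{(4,n,4,0)}$ compared with the generic situation in Theorem \ref{thm-2040}, and it is precisely what produces the dimension $2n-2m-1$ rather than $2n-2m-2$.

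First I would record, exactly as in the earlier proofs, that $\bC_{(4,n,2,0)}$ is generated by $x-1$ and has parameters $[n,n-1,2]_4$. The heart of the argument is the analysis of $C_1^{(4,n)}$. Since $2m-1$ is odd, one checks that $\ord_n(4)=2m-1$, so $|C_1^{(4,n)}|=2m-1$. The decisive observation is
\[
4^m = 2^{2m} = 2\cdot 2^{2m-1} \equiv 2 \pmod{n},
\]
which puts $2$ in $C_1^{(4,n)}$. Hence $C_1^{(4,n)}=C_2^{(4,n)}$ and $\m_{\beta}(x)=\m_{\beta^2}(x)$. It follows that the generator polynomial of $\bC_{(4,n,4,0)}$ is $(x-1)\m_{\beta}(x)$, of degree $2m$, so $\bC_{(4,n,4,0)}$ has parameters $[n, n-2m, d\geq 4]_4$ by the BCH bound applied to the three consecutive zeros at $\beta^0,\beta,\beta^2$.

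Since $\gcd(x-1,\m_{\beta}(x))=1$, Theorem \ref{thm-evanlint} then yields that $\bC(\bC_{(4,n,2,0)},\bC_{(4,n,4,0)})$ has generator polynomial $(x-1)^2\m_{\beta}(x)$, length $2n$, dimension $2n-2m-1$, and minimum distance $\min\{2\cdot 2,\,d(\bC_{(4,n,4,0)})\}=4$, which settles the parameter claim. For the optimality statement under $m\geq 3$, I would appeal to the sphere packing bound: a hypothetical $[2n,2n-2m-1,d\geq 5]_4$ code would force $V_{(4,2n)}(2)=1+6n+9\binom{2n}{2}=18n^2-3n+1\leq 4^{2m+1}$, whereas with $n=2^{2m-1}-1$ the leading term $18n^2$ is already $(9/2)\cdot 4^{2m}$, which dominates $4\cdot 4^{2m}=4^{2m+1}$. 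A short direct check then confirms the strict inequality $18n^2-3n+1>4^{2m+1}$ for all $m\geq 3$, ruling out $d\geq 5$ and establishing distance-optimality with respect to the sphere packing bound.

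The main obstacle, and really the only genuinely new ingredient compared with Theorem \ref{thm-2040}, is spotting the cyclotomic-coset collapse $C_1^{(4,n)}=C_2^{(4,n)}$; without it one would predict generator degree $2m+1$ and dimension $2n-2m-2$, which would be wrong. Once this observation is in hand, everything else is a direct application of Theorem \ref{thm-evanlint} together with a routine sphere-packing estimate.
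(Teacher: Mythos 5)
Your proposal is correct and follows essentially the same route as the paper: the key point in both is that $4^m\equiv 2\pmod{n}$ forces $2\in C_1^{(4,n)}$ with $|C_1^{(4,n)}|=2m-1$, so $\bC_{(4,n,4,0)}$ has generator polynomial $(x-1)\m_{\beta}(x)$ of degree $2m$, after which Theorem \ref{thm-evanlint} and the sphere-packing estimate finish the argument. Your explicit computation $V_{(4,2n)}(2)=18n^2-3n+1>4^{2m+1}$ for $m\geq 3$ just fills in the step the paper labels ``straightforward to verify.''
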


\begin{proof}
{\em
By definition, the BCH code $\bC_{(4, n, 2,  0)}$ has generator polynomial $x-1$ and thus dimension $n-1$. It follows from the BCH bound that $d(\bC_{(4, n, 2,  0)}) \geq 2$. It then follows from the Singleton bound that
$d(\bC_{(4, n, 2,  0)}) = 2$.  Hence,  $\bC_{(4, n, 2,  0)}$ has parameters $[n, n-1, 2]_4$.

Since $4^{2m}-2=2(2^{2m-1}-1)$,  $2$ is in the $4$-cyclotomic coset $C_1^{(4,n)}$.
By definition, the BCH code $\bC_{(4, n, 4,  0)}$ has generator polynomial $(x-1)\m_{\beta}(x)$.
It can be verified that
$
|C_1^{(4,n)}|=2m-1.
$
Consequently,
$$
\dim(\bC_{(4, n, 4,  0)})=n-2m.
$$
It follows from the BCH bound that
$
d(\bC_{(4, n, 4,  0)}) \geq 4.
$
Note that $\gcd(x-1, m_{\beta}(x))=1$. The desired conclusions on the parameters of the code
$\bC(\bC_{(4, n, 2,  0)}, \bC_{(4, n, 4,0)})$ then follow from Theorem \ref{thm-evanlint}.
It is straightforward to verify that the code $\bC(\bC_{(4, n, 2,  0)}, \bC_{(4, n, 4,0)})$
is distance-optimal with respect to the sphere packing bound for $m \geq 3$.
This completes the proof.
}
\end{proof}

\begin{example}
The first three codes in the family of codes $\bC(\bC_{(4,n,2,0)}, \bC_{(4,n,4,0)})$ are listed below.
\begin{itemize}
\item When $m= 2$, the code $\bC(\bC_{(4,n,2,0)}, \bC_{(4,n,4,0)})$ has parameters $[14,9,4  ]_4$ and is optimal according to \cite{Grassl}.
Notice that the best linear code known with parameters $[14,9,4]_4$ in \cite{Grassl} is not known to be cyclic.
\item When $m= 3$, the code $\bC(\bC_{(4,n,2,0)}, \bC_{(4,n,4,0)})$ has parameters $[62,55,4  ]_4$ and is
 optimal.
Notice that the best linear code known with parameters $[62,55,4]_4$ in \cite{Grassl} is not known to be cyclic.
\item When $m= 4$, the code $\bC(\bC_{(4,n,2,0)}, \bC_{(4,n,4,0)})$ has parameters $[254,245,4  ]_4$ and is
 optimal.
Notice that the best linear code known with parameters $[254,245,4]_4$ in \cite{Grassl} is not known to be cyclic.
\end{itemize}
\end{example}

\section{Infinite families of repeated-root cyclic codes over $\bF_2$ with best parameters known}\label{sec-dist68}

The following theorem documents an infinite family of repeated-root binary cyclic codes with
minimum distance $6$.

\begin{theorem}\label{thm-3160}
Let $m \geq 3$ be an integer and $n=2^m-1$. Then the repeated-root binary cyclic code
$\bC(\bC_{(2, n, 3,  1)}, \bC_{(2, n, 6,0)})$
has parameters $[2(2^m-1), 2(2^m-1)-3m-1,  6]_2$ and generator polynomial
$(x-1)\m_{\beta}(x)^2 \m_{\beta^3}(x)$, where
$\beta$ is a primitive element of $\bF_{2^m}$.
\end{theorem}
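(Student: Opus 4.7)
The plan is to apply Theorem~\ref{thm-evanlint} with $\bC_1=\bC_{(2,n,3,1)}$ and $\bC_2=\bC_{(2,n,6,0)}$, reading off the three pieces of data (generator polynomial, dimension, minimum distance) directly from the theorem.

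First I would identify $\bC_1$. The defining set of $\bC_{(2,n,3,1)}$ is $\{1,2\}$, and since $2\in C_1^{(2,n)}=\{1,2,4,\ldots,2^{m-1}\}$, the generator polynomial is $g_1(x)=\m_\beta(x)$, of degree $m$. Thus $\bC_1$ is the binary Hamming code with parameters $[2^m-1,\,2^m-1-m,\,3]_2$, and in particular $d(\bC_1)=3$.

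Next I would identify $\bC_2$. Its defining set is $\{0,1,2,3,4\}$, which decomposes as $\{0\}\cup C_1^{(2,n)}\cup C_3^{(2,n)}$ because $2,4\in C_1^{(2,n)}$ and $3\in C_3^{(2,n)}$. The BCH bound gives $d(\bC_2)\ge 6$. Writing $g_1(x)g_2(x)=(x-1)\m_\beta(x)\m_{\beta^3}(x)$, so $g_2(x)=(x-1)\m_{\beta^3}(x)$, I would check the key coset-size fact $|C_3^{(2,n)}|=m$ for $m\ge 3$: this is the smallest $\ell$ with $2^\ell\equiv 1\pmod{n/\gcd(3,n)}$, and an elementary argument (using $(2^d-1)\mid(2^m-1)$ only when $d\mid m$, together with $3\cdot 2^{m/2}<2^m$ for $m\ge 4$, and a direct check $m=3$) rules out every proper divisor of $m$. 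Combined with $3\notin C_1^{(2,n)}$ (since $3$ is not a power of $2$ modulo $n$ for $m\ge 3$), this shows $C_1^{(2,n)}$ and $C_3^{(2,n)}$ are distinct cosets of size~$m$, so $\deg(g_1g_2)=2m+1$ and $\dim(\bC_2)=n-2m-1$.

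Then I would verify the hypotheses of Theorem~\ref{thm-evanlint}: $n=2^m-1$ is odd, $\bC_1\subseteq\bF_2^n$ is cyclic with generator $g_1(x)$, the polynomial $g_2(x)=(x-1)\m_{\beta^3}(x)$ divides $x^n-1=x^n+1$, and $\bC_2$ is generated by $g_1(x)g_2(x)$. Since $\m_\beta(x)$ is irreducible of degree $m\ge 3$ and is distinct from both $x-1$ and $\m_{\beta^3}(x)$, we have $\gcd(g_1,g_2)=1$.

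Finally, Theorem~\ref{thm-evanlint} produces the generator polynomial
$$
\frac{g_1(x)^2 g_2(x)}{\gcd(g_1,g_2)}=(x-1)\,\m_\beta(x)^2\,\m_{\beta^3}(x),
$$
length $2n=2(2^m-1)$, dimension $2n-2m-(m+1)+0=2(2^m-1)-3m-1$, and minimum distance $\min\{2d(\bC_1),\,d(\bC_2)\}=\min\{6,\,\ge 6\}=6$. The only real obstacle is the arithmetic verification $|C_3^{(2,n)}|=m$, which splits into the easy odd-$m$ case and the slightly more delicate even-$m$ case requiring the order of $2$ modulo $(2^m-1)/3$; everything else is a bookkeeping application of the generalized van Lint theorem.
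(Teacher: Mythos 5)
Your proposal is correct and follows essentially the same route as the paper's proof: identify $\bC_{(2,n,3,1)}$ as the $[n,n-m,3]$ Hamming code, identify $\bC_{(2,n,6,0)}$ as the BCH code with generator polynomial $(x-1)\m_{\beta}(x)\m_{\beta^3}(x)$ of dimension $n-2m-1$ and distance at least $6$, check coprimality of $g_1$ and $g_2$, and apply Theorem~\ref{thm-evanlint}. Your only addition is a more explicit justification that $|C_3^{(2,n)}|=m$ (the paper dismisses this as easily verified), which is a welcome but inessential elaboration.
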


\begin{proof}
{\em
By definition, the BCH code $\bC_{(2, n, 3,  1)}$ has generator polynomial $\m_{\beta}(x)$ and parameters
$[n, n-m, 3]_2$, as it is the binary Hamming cyclic code. \\

By definition, the BCH code $\bC_{(2, n, 6,  0)}$ has generator polynomial $(x-1)\m_{\beta}(x)\m_{\beta^3}(x)$. It follows from the BCH bound that $d(\bC_{(2, n, 6,  0)}) \geq 6$.  It can be easily verified that
$
|C_1^{(2,n)}|=|C_3^{(2,n)}|=m.
$
Consequently,
$\bC_{(2, n, 6,  0)}$ has parameters $[n, n-2m-1,  d \geq 6]_2$. \\

Since $\m_{\beta^i}(x)$ is irreducible and has degree $m \geq 3$ for $i \in \{1,3\}$,  we deduce that
$\gcd(\m_{\beta}(x),  (x-1)\m_{\beta^3}(x))=1$.
By Theorem \ref{thm-evanlint} the code
$\bC(\bC_{(2, n, 3,  1)}, \bC_{(2, n, 6,0)})$ has generator polynomial $(x-1)\m_{\beta}(x)^2\m_{\beta^3}(x)$ and parameters $[2(2^m-1), 2(2^m-1)-3m-1,  6]_2$.
This completes the proof.
}
\end{proof}

\begin{example}
The following is a list of the first five codes in the family of binary cyclic codes $\bC(\bC_{(2, n, 3,  1)}, \bC_{(2, n, 6,0)})$.
\begin{itemize}
\item When $m= 3$,  the cyclic code $\bC(\bC_{(2, n, 3,  1)}, \bC_{(2, n, 6,0)})$ has parameters $[14,4,6]_2$.
Notice that the best linear code known with parameters $[14,4,7]_2$ in \cite{Grassl} is not known to be cyclic.
\item When $m= 4$,  the cyclic code $\bC(\bC_{(2, n, 3,  1)}, \bC_{(2, n, 6,0)})$ has parameters $[30,17,6]_2$.
Notice that the best  linear code known with parameters $[30,17,6]_2$ in \cite{Grassl} is not known to be cyclic.
\item When $m= 5$,  the cyclic code $\bC(\bC_{(2, n, 3,  1)}, \bC_{(2, n, 6,0)})$ has parameters $[62,46,6]_2$.
Notice that the best linear code known with parameters $[62,46,6]_2$ in \cite{Grassl} is not known to be cyclic.
\item When $m= 6$,  the cyclic code $\bC(\bC_{(2, n, 3,  1)}, \bC_{(2, n, 6,0)})$ has parameters $[126,107,6]_2$.
Notice that the best linear code known with parameters $[126,107,6]_2$ in \cite{Grassl} is not known to be cyclic.
\item When $m= 7$,  the cyclic code $\bC(\bC_{(2, n, 3,  1)}, \bC_{(2, n, 6,0)})$ has parameters $[254,232,6]_2$.
Notice that the best linear code known with parameters $[254,232,6]_2$ in \cite{Grassl} is not known to be cyclic.
\end{itemize}
In all the five cases, the upper bound on the minimum distance of the code $\bC(\bC_{(2, n, 3,  1)}, \bC_{(2, n, 6,0)})$ is $7$ according to  \cite{Grassl}.  Hence,  in these five cases  the code $\bC(\bC_{(2, n, 3,  1)}, \bC_{(2, n, 6,0)})$ is distance-almost-optimal.
\end{example}

To the best knowledge of the authors, no infinite family of binary linear codes with the same length and dimension better than $\bC(\bC_{(2, n, 3,  1)}, \bC_{(2, n, 6,0)})$ is reported in the literature.
The following theorem documents an infinite family of repeated-root binary cyclic codes with minimum distance $8$.

\begin{theorem}\label{thm-3080}
Let $m \geq 5$ be an integer and $n=2^m-1$. Then the repeated-root binary cyclic code
$\bC(\bC_{(2, n, 3,  0)}, \bC_{(2, n, 8,0)})$
has parameters $[2(2^m-1), 2(2^m-1)-4m-2,  8]_2$ and generator polynomial
$(x-1)^2 \m_{\beta}(x)^2 \m_{\beta^3}(x)  \m_{\beta^5}(x)$, where
$\beta$ is a primitive element of $\bF_{2^m}$.
\end{theorem}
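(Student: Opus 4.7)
The plan is to treat Theorem~\ref{thm-3080} exactly as Theorem~\ref{thm-3160} was handled: feed the two BCH building blocks $\bC_1:=\bC_{(2,n,3,0)}$ and $\bC_2:=\bC_{(2,n,8,0)}$ into the generalized van Lint theorem (Theorem~\ref{thm-evanlint}), and read off the generator polynomial, dimension, and minimum distance of $\bC(\bC_1,\bC_2)$.

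First I would identify $\bC_1$ as the even-weight subcode of the binary Hamming cyclic code of length $n=2^m-1$. Its generator polynomial is $g_1(x)=(x-1)\m_{\beta}(x)$ of degree $m+1$, so $\dim(\bC_1)=n-m-1$. The BCH bound gives $d(\bC_1)\geq 3$, but the factor $x-1$ forces even Hamming weights, so in fact $d(\bC_1)\geq 4$. To see that this is tight, I would exhibit a weight-$4$ codeword as the sum of two distinct weight-$3$ codewords of the ambient Hamming code whose supports overlap in exactly one coordinate, which exists for every $m\geq 3$. Hence $\bC_1$ has parameters $[n,n-m-1,4]_2$, and in particular $2d(\bC_1)=8$.

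Next I would compute the parameters of $\bC_2$. Its generator polynomial is $(x-1)\m_{\beta}(x)\m_{\beta^3}(x)\m_{\beta^5}(x)$, and the seven consecutive defining-set elements $0,1,2,3,4,5,6$ yield $d(\bC_2)\geq 8$ by the BCH bound. For the dimension I need that the $2$-cyclotomic cosets $C_1^{(2,n)},C_3^{(2,n)},C_5^{(2,n)}$ modulo $2^m-1$ are pairwise disjoint and each of size $m$. Disjointness follows from $C_1=\{1,2,4,\ldots,2^{m-1}\}$ together with the observations that $3$ and $5$ are not powers of $2$ modulo $2^m-1$ and that $5\not\equiv 3\cdot 2^i\pmod{2^m-1}$ for $0\leq i\leq m-1$. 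The delicate point, and the step I expect to be the main obstacle, is verifying that $|C_5^{(2,n)}|=m$ for \emph{every} $m\geq 5$: this fails already at $m=4$, so the hypothesis $m\geq 5$ is essential. I would split into the cases $\gcd(5,2^m-1)=1$, where the order of $2$ modulo $2^m-1$ forces $|C_5|=m$ directly, and $\gcd(5,2^m-1)=5$ (which occurs precisely when $4\mid m$), where one must rule out any $0<i<m$ with $(2^m-1)/5 \mid 2^i-1$ by comparing sizes. Granting this, $\dim(\bC_2)=n-3m-1$.

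Finally, because $x-1,\m_{\beta}(x),\m_{\beta^3}(x),\m_{\beta^5}(x)$ are pairwise coprime irreducible factors of $x^n-1$, we have $g_2(x):=\m_{\beta^3}(x)\m_{\beta^5}(x)$ dividing $x^n-1$ and $\gcd(g_1,g_2)=1$. Theorem~\ref{thm-evanlint} then immediately delivers the generator polynomial $g_1(x)^2 g_2(x)=(x-1)^2\m_{\beta}(x)^2\m_{\beta^3}(x)\m_{\beta^5}(x)$, dimension $2n-2(m+1)-2m=2n-4m-2$, and minimum distance $\min\{2d(\bC_1),d(\bC_2)\}=\min\{8,d(\bC_2)\}=8$, completing the proof.
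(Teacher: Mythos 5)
Your proposal is correct and follows essentially the same route as the paper: identify $\bC_{(2,n,3,0)}$ as the $[n,n-m-1,4]_2$ even-weight subcode of the Hamming code, show $\bC_{(2,n,8,0)}$ has parameters $[n,n-3m-1,d\geq 8]_2$ via the BCH bound and the fact that $C_1^{(2,n)},C_3^{(2,n)},C_5^{(2,n)}$ are pairwise disjoint of size $m$ for $m\geq 5$, and then apply Theorem~\ref{thm-evanlint}. The only difference is that you fill in details the paper dismisses as ``easily verified'' (the weight-$4$ codeword and the coset-size analysis, including why $m\geq 5$ is needed), which is a welcome elaboration rather than a different argument.
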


\begin{proof}
{\em
By definition, the BCH code $\bC_{(2, n, 3,  0)}$ has generator polynomial $(x-1)\m_{\beta}(x)$ and parameters
$[n, n-m-1, 4]_2$, as it is the even-weight subcode of the binary Hamming cyclic code. \\

By definition, the BCH code $\bC_{(2, n, 8,  0)}$ has generator polynomial
$(x-1)\m_{\beta}(x)\m_{\beta^3}(x)   \m_{\beta^5}(x)$. It follows from the BCH bound that $d(\bC_{(2, n, 8,  0)}) \geq 8$.  It can be easily verified that
$
|C_1^{(2,n)}|=|C_3^{(2,n)}|= |C_5^{(2,n)}|=m
$
for $m \geq 5$.
Consequently,
$\bC_{(2, n, 8,  0)}$ has parameters $[n, n-3m-1,  d \geq 8]_2$ for $m \geq 5$. \\

Since $\m_{\beta^i}(x)$ is irreducible and has degree $m \geq 5$ for $i \in \{1,3,5\}$,  we deduce that
$\gcd(\m_{\beta}(x)(x-1),  \m_{\beta^3}(x)  \m_{\beta^5}(x) )=1$.
By Theorem \ref{thm-evanlint} the code
$\bC(\bC_{(2, n, 3,  0)}, \bC_{(2, n, 8,0)})$ has generator polynomial
$(x-1)^2\m_{\beta}(x)^2\m_{\beta^3}(x)  \m_{\beta^5}(x)$ and parameters
$[2(2^m-1), 2(2^m-1)-4m-2,  8]_2$.
This completes the proof.
}
\end{proof}

\begin{example}
The following is a list of the first four codes in the family of binary cyclic codes $\bC(\bC_{(2, n, 3,  0)}, \bC_{(2, n, 8,0)})$.
\begin{itemize}
\item When $m= 4$,  the cyclic code $\bC(\bC_{(2, n, 3,  0)}, \bC_{(2, n, 8,0)})$ has parameters $[30,14,8]_2$
and is distance-optimal according to \cite{Grassl}.
Notice that the best  linear code known with parameters $[30,14,8]_2$ in \cite{Grassl} is not known to be cyclic.  Note that the the parameters of the code in the case $m=4$ are not given by Theorem \ref{thm-3080}.
\item When $m= 5$,  the cyclic code $\bC(\bC_{(2, n, 3,  0)}, \bC_{(2, n, 8,0)})$ has parameters $[62,40,8]_2$.
Notice that the best linear code known with parameters $[62,40,8]_2$ in \cite{Grassl} is not known to be cyclic.
\item When $m= 6$,  the cyclic code $\bC(\bC_{(2, n, 3,  0)}, \bC_{(2, n, 8,0)})$ has parameters $[126,100,8]_2$.
Notice that the best linear code known with parameters $[126,100,8]_2$ in \cite{Grassl} is not known to be cyclic.
\item When $m= 7$,  the cyclic code $\bC(\bC_{(2, n, 3,  0)}, \bC_{(2, n, 8,0)})$ has parameters $[254,224,8]_2$.
Notice that the best linear code known with parameters $[254,224,8]_2$ in \cite{Grassl} is not known to be cyclic.
\end{itemize}
In all the four cases,  the cyclic code $\bC(\bC_{(2, n, 3,  0)}, \bC_{(2, n, 8,0)})$ has the same parameters as the best linear code in  \cite{Grassl}.
\end{example}

To the best knowledge of the authors, no infinite family of binary linear codes with the same length and dimension better than $\bC(\bC_{(2, n, 3,  0)}, \bC_{(2, n, 8,0)})$ is reported in the literature.

\section{Infinite families of repeated-root cyclic codes over $\bF_2$ with minimum distance $8$ or at least $10$}\label{sec-dist810}

In this section, we construct several families binary repeated-root cyclic codes with minimum distances $8$ or at least $10$.

\begin{theorem}\label{thm-281}
Let $m \geq 8$ be an even integer and $n=\frac{2^m-1}{3}$. Then the repeated-root binary cyclic code
$\bC(\bC_{(2, n, 4,  0)}, \bC_{(2, n, 8,0)})$
has parameters $[\frac{2(2^m-1)}{3}, \frac{2(2^m-1)}{3}-4m-2, 8]_2$ and generator polynomial
$(x-1)^2\m_{\beta}(x)^2 \m_{\beta^3}(x)\m_{\beta^5}(x)$, where
$\beta$ is an $n$-th root of unity in $\bF_{2^m}$.
\end{theorem}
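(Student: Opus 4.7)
The plan follows the now-familiar template used for Theorems~\ref{thm-3160} and~\ref{thm-3080}: identify the two BCH building blocks $\bC_1 := \bC_{(2,n,4,0)}$ and $\bC_2 := \bC_{(2,n,8,0)}$, compute their generator polynomials by a cyclotomic-coset analysis modulo $n=(2^m-1)/3$, verify the coprimality hypothesis of the generalized van Lint theorem, and then combine via Theorem~\ref{thm-evanlint}.

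First I would confirm that $|C_1^{(2,n)}| = |C_3^{(2,n)}| = |C_5^{(2,n)}| = m$ for $m \geq 8$ even. The congruence $2^m \equiv 1 \pmod{n}$ gives $\ord_n(2) \mid m$, and any proper divisor of $m$ would force $n \leq 2^{m/2}-1$, which conflicts with $n = (2^m-1)/3$ as soon as $m \geq 4$; so $|C_1^{(2,n)}| = m$. An entirely parallel argument applied to $n/\gcd(n,3)$ and $n/\gcd(n,5)$ yields $|C_3^{(2,n)}| = |C_5^{(2,n)}| = m$, and the hypothesis $m \geq 8$ is precisely what rules out the small exceptional cases (for instance, at $m=6$ one has $|C_3^{(2,n)}| = 3$). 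Using the characteristic-$2$ identities $C_{2i}^{(2,n)} = C_i^{(2,n)}$, the defining set $\{0,1,2\}$ of $\bC_1$ collapses to $\{0\} \cup C_1^{(2,n)}$, yielding generator polynomial $(x-1)\m_\beta(x)$ of degree $m+1$; likewise, the defining set $\{0,1,\dots,6\}$ of $\bC_2$ collapses to $\{0\} \cup C_1^{(2,n)} \cup C_3^{(2,n)} \cup C_5^{(2,n)}$, giving generator $(x-1)\m_\beta(x)\m_{\beta^3}(x)\m_{\beta^5}(x)$ of degree $3m+1$. The BCH bound then supplies $d(\bC_1) \geq 4$ and $d(\bC_2) \geq 8$.

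The step I expect to be the main obstacle is showing $d(\bC_1) = 4$ exactly, which is what pins the minimum distance of the Plotkin sum at $8$ rather than something larger. A weight-$4$ codeword of $\bC_1$ corresponds to four distinct elements $\gamma_1,\gamma_2,\gamma_3,\gamma_4$ of $H := \langle \beta \rangle \subseteq \bF_{2^m}^*$ with $\gamma_1+\gamma_2+\gamma_3+\gamma_4 = 0$ (the vanishing at $x=1$ is automatic since $4$ is even). I would produce such a quadruple by a pigeonhole argument on pair-sums: the map $\{a,b\} \mapsto a+b$ sends the $\binom{n}{2}$ unordered pairs of distinct elements of $H$ into $\bF_{2^m} \setminus \{0\}$, a set of size $2^m-1$. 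For $m \geq 8$ one checks $\binom{n}{2} > 2^m-1$, so two distinct pairs $\{a,b\}$ and $\{c,d\}$ must share a sum; because $a+b = a+c$ would force $b = c$, these two pairs are in fact disjoint, and $\{a,b,c,d\}$ is the desired quadruple.

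Finally, with $g_1(x) = (x-1)\m_\beta(x)$ and $g_2(x) = \m_{\beta^3}(x)\m_{\beta^5}(x)$, the divisor $g_2(x)$ of $x^n+1$ is coprime to $g_1(x)$ since $(x-1),\,\m_\beta,\,\m_{\beta^3},\,\m_{\beta^5}$ are pairwise distinct irreducible factors of $x^n-1$ over $\bF_2$. Theorem~\ref{thm-evanlint} then delivers the generator polynomial $g_1(x)^2 g_2(x) = (x-1)^2 \m_\beta(x)^2 \m_{\beta^3}(x) \m_{\beta^5}(x)$, the dimension $2n - 2(m+1) - 2m = 2n - 4m - 2$, and the minimum distance $\min\{2d(\bC_1),\,d(\bC_2)\} = \min\{8,\,d(\bC_2)\} = 8$, which translates directly into the claimed parameters $[2(2^m-1)/3,\,2(2^m-1)/3-4m-2,\,8]_2$.
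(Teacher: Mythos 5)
Your proposal follows the same route as the paper: the same two BCH building blocks $\bC_{(2,n,4,0)}$ and $\bC_{(2,n,8,0)}$, the same cyclotomic-coset bookkeeping giving $|C_1^{(2,n)}|=|C_3^{(2,n)}|=|C_5^{(2,n)}|=m$, and the same application of Theorem~\ref{thm-evanlint} to obtain the generator polynomial $(x-1)^2\m_\beta(x)^2\m_{\beta^3}(x)\m_{\beta^5}(x)$ and the parameters $[2n,2n-4m-2,8]_2$. The one place where you genuinely diverge is in pinning down $d(\bC_{(2,n,4,0)})=4$ exactly: the paper gets the upper bound $d\leq 4$ from the sphere-packing bound applied to an $[n,n-m-1]_2$ code, whereas you exhibit a weight-$4$ codeword directly by a pigeonhole argument on pair-sums in $\langle\beta\rangle$ (two distinct pairs with equal sum must be disjoint in characteristic $2$, yielding four distinct elements summing to zero, and the vanishing at $x=1$ is automatic). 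Both arguments are correct and need $m$ large enough to make a counting inequality hold; yours is more constructive and self-contained, while the paper's is shorter and reuses a bound already quoted in Section~\ref{sec-intro}. Your explicit justification of the coset sizes (order divides $m$, and a proper divisor would force $n/\gcd(n,a)\leq 2^{m/2}-1$) is also more detailed than the paper's ``it can be easily verified,'' and correctly identifies why $m\geq 8$ excludes the exceptional case $m=6$; the only small omission is that you assert rather than verify that $C_1^{(2,n)}$, $C_3^{(2,n)}$, $C_5^{(2,n)}$ are pairwise distinct, which is needed for the degree count $3m+1$, but the paper does no more there either.
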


\begin{proof}
{\em
    For $m \geq 8$ being even, $3\mid 2^m-1$, and it can be easily verified that $|C_1^{(2,n)}|=|C_3^{(2,n)}|= |C_5^{(2,n)}|=m$ and $C_1^{(2,n)}$, $C_3^{(2,n)}$ and $C_5^{(2,n)}$ are pairwise disjoint.
    By definition, the BCH code $\bC_{(2, n, 4, 0)}$ has generator polynomial $(x-1)\m_{\beta}(x)$ and thus
    dimension $n-m-1$.  It follows from the BCH bound that $d(\bC_{(2, n, 4, 0)})\geq 4$.  It follows from
    the sphere packing bound that $d(\bC_{(2, n, 4, 0)})\leq 4$. Consequently, $d(\bC_{(2, n, 4, 0)}) = 4$ and
     $\bC_{(2, n, 4, 0)}$ has parameters $[n, n-m-1, 4]_2$.
     It can be verified that the BCH code $\bC_{(2, n, 8,  0)}$ has generator polynomial $(x-1)\m_{\beta}(x)\m_{\beta^3}(x) \m_{\beta^5}(x)$.
    It follows from the BCH bound that $d(\bC_{(2, n, 8,  0)}) \geq 8$.
    Then the BCH code $\bC_{(2, n, 8,  0)}$ has parameters $[n, n-3m-1, d \geq 8]_2$. \\

    By Theorem \ref{thm-evanlint}, the code $\bC(\bC_{(2, n, 4, 0)}, \bC_{(2, n, 8,0)})$ has generator polynomial $(x-1)^2\m_{\beta}(x)^2$ $\m_{\beta^3}(x) \m_{\beta^5}(x)$ and parameters $[\frac{2(2^m-1)}{3}, \frac{2(2^m-1)}{3}-4m-2, 8]_2$.  The proof is completed.

    }
\end{proof}

\begin{theorem}\label{thm-282}
Let $m \geq 5$ be an odd integer and let $n=3(2^m-1)$. Then the repeated-root binary cyclic code $\bC(\bC_{(2, n, 4,  0)}, \bC_{(2, n, 8,0)})$ has parameters $[6(2^m-1), 6(2^m-1)-7m-2,  8]_2$ and generator polynomial $(x-1)^2\m_{\beta}(x)^2 \m_{\beta^3}(x)\m_{\beta^5}(x)$, where $\beta$
 is an $n$-th root of unity in $\bF_{2^{2m}}.$
\end{theorem}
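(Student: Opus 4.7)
The plan is to mirror the proof of Theorem~\ref{thm-281}, replacing $n = (2^m-1)/3$ (with $m$ even) by $n = 3(2^m-1)$ (with $m$ odd). Four ingredients must be checked: the existence of $\beta$ in $\bF_{2^{2m}}$, the sizes and pairwise disjointness of the relevant $2$-cyclotomic cosets, the distances of the two BCH building blocks $\bC_1 = \bC_{(2,n,4,0)}$ and $\bC_2 = \bC_{(2,n,8,0)}$, and finally the application of Theorem~\ref{thm-evanlint}.

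For the first ingredient: since $m$ is odd, $2^m \equiv -1 \pmod{3}$, so $3 \mid 2^m+1$ and hence $n = 3(2^m-1) \mid (2^m-1)(2^m+1) = 2^{2m}-1$, placing a primitive $n$-th root of unity $\beta$ in $\bF_{2^{2m}}$. I would then confirm $\ord_n(2) = 2m$ by combining $2^{2m} \equiv 1 \pmod{3(2^m-1)}$ with the observation that $2^m \equiv 2 \pmod{3}$ rules out every proper divisor of $2m$ (for $m$ odd, the divisors of $2m$ that do not divide $m$ have the form $2d$ with $d \mid m$, and $2^{2d} \equiv 1 \pmod{2^m-1}$ then forces $d = m$).

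Using $|C_a^{(2,n)}| = \ord_{n/\gcd(a,n)}(2)$ next gives $|C_1^{(2,n)}| = 2m$, $|C_3^{(2,n)}| = \ord_{2^m-1}(2) = m$ (since $\gcd(3,n) = 3$), and $|C_5^{(2,n)}| = 2m$ (since $m$ odd forces $4 \nmid m$, hence $5 \nmid 2^m-1$ and $\gcd(5,n) = 1$). Pairwise disjointness follows: reduction modulo $3$ separates $C_3^{(2,n)}$ from the other two, while a relation $2^k \equiv 5 \pmod n$ would force $2^{k \bmod m} = 5$ as an integer in $[1, 2^{m-1}]$, which is impossible. Consequently $\bC_1$ has defining set $\{0\} \cup C_1^{(2,n)}$ and generator polynomial $g_1(x) = (x-1)\m_{\beta}(x)$ of degree $2m+1$, while $\bC_2$ has defining set $\{0\} \cup C_1^{(2,n)} \cup C_3^{(2,n)} \cup C_5^{(2,n)}$ and generator polynomial $g_1(x)g_2(x) = (x-1)\m_{\beta}(x)\m_{\beta^3}(x)\m_{\beta^5}(x)$ of degree $5m+1$, where $g_2(x) = \m_{\beta^3}(x)\m_{\beta^5}(x)$. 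The BCH bound then yields $d(\bC_1) \geq 4$ and $d(\bC_2) \geq 8$, and the sphere-packing bound (exactly as in the proof of Theorem~\ref{thm-281}, using that $1 + n + \binom{n}{2} > 2^{2m+1}$ for $n = 3(2^m-1)$ and $m \geq 5$) forces $d(\bC_1) \leq 4$, so $d(\bC_1) = 4$.

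Finally, because $(x-1)$, $\m_{\beta}(x)$, $\m_{\beta^3}(x)$, $\m_{\beta^5}(x)$ are pairwise coprime irreducible polynomials, $\gcd(g_1, g_2) = 1$; Theorem~\ref{thm-evanlint} then delivers the generator polynomial $g_1(x)^2 g_2(x) = (x-1)^2 \m_{\beta}(x)^2 \m_{\beta^3}(x)\m_{\beta^5}(x)$ of degree $7m+2$, dimension $2n - (7m+2) = 6(2^m-1) - 7m - 2$, and minimum distance $\min\{2 d(\bC_1), d(\bC_2)\} = \min\{8, d(\bC_2)\} = 8$. The delicate step, and essentially the only place where the hypothesis $m \geq 5$ is actually used, is verifying $C_1^{(2,n)} \cap C_5^{(2,n)} = \emptyset$ together with the sphere-packing inequality that pins $d(\bC_1)$ at exactly $4$; without this last equality, Theorem~\ref{thm-evanlint} would only give $d \geq 8$, not $d = 8$.
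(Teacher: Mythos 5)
Your proposal is correct and follows essentially the same route as the paper's proof: determine the cyclotomic coset sizes ($|C_1^{(2,n)}|=|C_5^{(2,n)}|=2m$, $|C_3^{(2,n)}|=m$) and their disjointness, pin down $d(\bC_{(2,n,4,0)})=4$ via the BCH and sphere-packing bounds, get $d(\bC_{(2,n,8,0)})\geq 8$ from the BCH bound, and then apply Theorem~\ref{thm-evanlint} with $\gcd(g_1,g_2)=1$. One small point in your favour: your argument that $C_1^{(2,n)}\neq C_5^{(2,n)}$ (reducing modulo $2^m-1$ so that $2^k\equiv 5\pmod n$ would force a power of $2$ in $\{1,2,\dots,2^{m-1}\}$ to equal $5$) is cleaner than the paper's stated justification that $\gcd(n,2^j-5)=1$ for all $j$, which as written fails for odd $j$ (then $3\mid 2^j-5$), even though the intended conclusion holds.
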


\begin{proof}
{\em
Recall that $n = 3(2^{m}-1)$.  It can be easily verified that $\ord_n(2)=2m$.
    For $m\geq 5$ being odd,  we have that $|C_1^{(2, n)}|=\ord_n(2) = 2m$.
    Let $|C_3^{(2, n)}|=m_3$. Then by definition $3\cdot2^{m_3} \equiv 3 \pmod n$.
    We deduce that $m_3 = m$ since $3(2^{m}-1) \mid 3(2^{m_3}-1)$.
    Similarly, let $|C_5^{(2, n)}|=m_5 $, we have $3(2^{m}-1) \mid 5(2^{m_5}-1)$.
    Then $3 \mid 5(2^{m_5}-1)$ if and only if $m_5$ is even, since $\gcd(2^2 -1, 2^{m_5}-1) =2^{\gcd(2, m_5)}-1$.
    On the other hand, because $m$ is odd, $\gcd(2^2 +1, 2^{m}-1) = 1$. Thus $(2^{m}-1) \mid 5(2^{m_5}-1)$ if and only if $m\mid m_5$.
    It implies that $|C_5^{(2, n)}| = 2m$.
    It is easily seen that
    $$
    \gcd(n, 2^j-5)=1
    $$
    for each $j$ with $0 \leq j \leq 2m-1$. Consequently,
    $1$ and $5$ are not in the same cyclotomic coset. \\

 By definition, the BCH code $\bC_{(2, n, 4, 0)}$ has generator polynomial $(x-1)\m_{\beta}(x)$ and thus
    dimension $n-2m-1$.  It follows from the BCH bound that $d(\bC_{(2, n, 4, 0)})\geq 4$.  It follows from
    the sphere packing bound that $d(\bC_{(2, n, 4, 0)})\leq 4$. Consequently, $d(\bC_{(2, n, 4, 0)}) = 4$ and
     $\bC_{(2, n, 4, 0)}$ has parameters $[n, n-2m-1, 4]_2$.
    It can be verified that the BCH code $\bC_{(2, n, 8,  0)}$ has generator polynomial $(x-1)\m_{\beta}(x)\m_{\beta^3}(x) \m_{\beta^5}(x)$.  Consequently,
    $$
    \dim(\bC_{(2, n, 8,  0)})=n-(1+|C_1^{(2, n)}|+|C_3^{(2, n)}|+|C_5^{(2, n)}|)=n-5m-1.
    $$
    It follows from the BCH bound that $d(\bC_{(2, n, 8,  0)}) \ge 8$.  Hence, $\bC_{(2, n, 8,  0)}$ has
    parameters $[n, n-5m-1, d \geq 8]_2$. \\

    According to Theorem \ref{thm-evanlint}, the code $\bC(\bC_{(2, n, 4, 0)}, \bC_{(2, n, 8,0)})$ has generator polynomial $(x-1)^2\m_{\beta}(x)^2$ $\m_{\beta^3}(x) \m_{\beta^5}(x)$ and parameters $[6(2^m-1), 6(2^m-1)-7m-2,  8]_2$. The proof is completed.

}
\end{proof}

\begin{theorem}\label{thm-210}
Let $m \geq 5$ be an odd integer and $n=3(2^m-1)$. Then the repeated-root binary cyclic code $\bC(\bC_{(2, n, 5, 1)}, \bC_{(2, n, 10, n-4)})$ has parameters $[6(2^m-1), 6(2^m-1)-9m-1,  d \geq 10]_2$ and generator polynomial $(x-1)\m_{\beta}(x)^2 \m_{\beta^3}(x)^2 \m_{\beta^{-1}}(x)\m_{\beta^{-3}}(x)$, where $\beta$
 is an $n$-th root of unity in $\bF_{2^{2m}}$.
\end{theorem}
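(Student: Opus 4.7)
The plan is to apply the generalised van Lint theorem (Theorem~\ref{thm-evanlint}) to $\bC_1=\bC_{(2,n,5,1)}$ and $\bC_2=\bC_{(2,n,10,n-4)}$. I first need the right cyclotomic coset data. Theorem~\ref{thm-282} already supplies $|C_1^{(2,n)}|=2m$ and $|C_3^{(2,n)}|=m$ for odd $m\geq 5$; the same arguments give $|C_{-1}^{(2,n)}|=\ord_n(2)=2m$ and $|C_{-3}^{(2,n)}|=m$. The crucial step is then to show that $\{0\},C_1^{(2,n)},C_3^{(2,n)},C_{-1}^{(2,n)},C_{-3}^{(2,n)}$ are pairwise disjoint. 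Reduction modulo $3$ rules out $C_1\cap C_3$, $C_1\cap C_{-3}$, $C_3\cap C_{-1}$ and $C_{-1}\cap C_{-3}$, because any congruence of the form $3\cdot 2^k\equiv \pm 1\pmod n$ would force $0\equiv\pm 1\pmod 3$. For $C_1\cap C_{-1}$ and $C_3\cap C_{-3}$ I would reduce modulo $2^m-1$: a relation $2^k\equiv -1\pmod{2^m-1}$ gives $2^{2k}\equiv 1\pmod{2^m-1}$, so $m\mid 2k$, and the odd parity of $m$ forces $m\mid k$ and then $1\equiv -1\pmod{2^m-1}$, which is impossible for $m\geq 5$.

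Next I would identify the two BCH codes. The defining set of $\bC_1=\bC_{(2,n,5,1)}$ contains the $4$ consecutive integers $\{1,2,3,4\}\subset C_1^{(2,n)}\cup C_3^{(2,n)}$, so its generator polynomial is $g_1(x)=\m_\beta(x)\m_{\beta^3}(x)$ of degree $3m$, and the BCH bound gives $d(\bC_1)\geq 5$. The defining set of $\bC_2=\bC_{(2,n,10,n-4)}$ contains the $9$ consecutive integers $\{-4,-3,-2,-1,0,1,2,3,4\}$, whose cyclotomic closure is exactly $\{0\}\cup C_1^{(2,n)}\cup C_3^{(2,n)}\cup C_{-1}^{(2,n)}\cup C_{-3}^{(2,n)}$ by the disjointness above together with the obvious inclusions (e.g.\ $-4=-1\cdot 2^2\in C_{-1}^{(2,n)}$). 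Hence the generator polynomial of $\bC_2$ is $g_1(x)g_2(x)=(x-1)\m_\beta(x)\m_{\beta^3}(x)\m_{\beta^{-1}}(x)\m_{\beta^{-3}}(x)$, where $g_2(x)=(x-1)\m_{\beta^{-1}}(x)\m_{\beta^{-3}}(x)$ divides $x^n-1$, and the BCH bound gives $d(\bC_2)\geq 10$.

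Finally I would feed these into Theorem~\ref{thm-evanlint}. Since $g_1(x)$ and $g_2(x)$ involve pairwise distinct irreducible factors of $x^n-1$, $\gcd(g_1,g_2)=1$. The theorem then yields $\bC(\bC_1,\bC_2)$ with generator polynomial
$$g_1(x)^2g_2(x)=(x-1)\m_\beta(x)^2\m_{\beta^3}(x)^2\m_{\beta^{-1}}(x)\m_{\beta^{-3}}(x)$$
of degree $9m+1$, hence length $6(2^m-1)$ and dimension $6(2^m-1)-9m-1$, together with the distance estimate $d(\bC(\bC_1,\bC_2))=\min\{2d(\bC_1),d(\bC_2)\}\geq\min\{10,10\}=10$. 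The main obstacle is the cyclotomic coset bookkeeping for $\{-4,\ldots,4\}$, particularly the disjointness of $C_1^{(2,n)}$ from $C_{-1}^{(2,n)}$ and of $C_3^{(2,n)}$ from $C_{-3}^{(2,n)}$, which is where the odd parity hypothesis on $m$ is actually used; once these are verified, the remainder is a routine application of the BCH bound and Theorem~\ref{thm-evanlint}.
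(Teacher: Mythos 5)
Your proposal is correct and follows essentially the same route as the paper: compute the sizes of the cyclotomic cosets $C_1^{(2,n)}, C_3^{(2,n)}, C_{-1}^{(2,n)}, C_{-3}^{(2,n)}$, verify they are pairwise distinct, read off the generator polynomials of the two BCH codes, and apply Theorem~\ref{thm-evanlint}. The only difference is in the disjointness sub-step, where you argue by reduction modulo $3$ and modulo $2^m-1$ while the paper uses a cardinality-counting contradiction; your version is, if anything, the cleaner of the two.
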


\begin{proof}
{\em
    Let $m\geq 5$ be an odd integer and $n = 3(2^{m}-1)$. According to the proof of Theorem \ref{thm-282}, we have that $|C_1^{(2, n)}| = \mathrm{ord}_n(2) =2m$ and $|C_3^{(2, n)}| = m$.
It can be verified that $|C_{n-1}^{(2, n)}| = 2m$ and $|C_{n-3}^{(2, n)}| = m$.
Furthermore,  we claim that $C_1^{(2, n)} \neq C_{n-1}^{(2, n)}$.
    If $C_1^{(2, n)} = C_{n-1}^{(2, n)}$, then both 1 and $-1\bmod n$ are in the set $C = \{2^i \bmod n, n-2^j \bmod n ~|~ 0 \leq i, j \leq m \}$.
    Since $n-2^{m}=2^{m+1} - 3 >2^{m}$, then $|C| = 2m+2 > \mathrm{ord}_n(2) = 2m$, which  is a contradiction.
    The proof of $C_3^{(2, n)} \neq C_{n-3}^{(2, n)}$ is similar and skipped.\\

    By definition and the BCH bound,  the BCH code $\bC_{(2, n, 5, 1)}$ has generator polynomial $\m_{\beta}(x)\m_{\beta^3}(x)$ and parameters $[n, n-3m, d \geq 5]_2$, and
    the BCH code $\bC_{(2, n, 10, n-4)}$ has generator polynomial $(x-1)\m_{\beta}(x)\m_{\beta^3}(x)$ $\m_{\beta^{-1}}(x)\m_{\beta^{-3}}(x)$ and parameters $[n, n-6m-1, d \geq 10]_2$. \\

     By Theorem \ref{thm-evanlint},  the cyclic code $\bC(\bC_{(2, n, 5, 1)}, \bC_{(2, n, 10, n-4)})$ has generator polynomial $(x-1)\m_{\beta}(x)^2 \m_{\beta^3}(x)^2$ $\m_{\beta^{-1}}(x)\m_{\beta^{-3}}(x)$ and parameters $[6(2^m-1), 6(2^m-1)-9m-1, d \geq 10]_2$. The proof is finished.

}
\end{proof}

\begin{example}
The following are some binary repeated-root cyclic codes obtained in Theorems \ref{thm-281}, \ref{thm-282} and \ref{thm-210}.
\begin{itemize}
\item When $m= 8$,  the binary repeated-root cyclic code in Theorem \ref{thm-281} has parameters $[170,136,8]_2$.
Notice that the best known minimum distance of a $[170, 136]_2$ linear code in \cite{Grassl} is $10$,
which is not known to be cyclic.

\item When $m= 5$,  the binary repeated-root cyclic code in Theorem \ref{thm-282} has parameters $[186,149,8]_2$.
Notice that the best known minimum distance of a $[186, 149]_2$ linear code in \cite{Grassl} is $10$,
which is not known to be cyclic.

\item When $m= 5$,  the binary repeated-root cyclic code in Theorem \ref{thm-210} has parameters $[186,140,10]_2$.
Notice that the best known minimum distance of a $[186, 140]_2$ linear code in \cite{Grassl} is $12$,
which is not known to be cyclic.
\end{itemize}
\end{example}

\section{Two infinite families of repeated-root cyclic codes over $\bF_q$ with minimum distance 6 for $q \geq 4$}\label{sec-dist6}

In this section, we construct two infinite families of repeated-root cyclic codes over $\bF_q$ with minimum distance 6 for $q \geq 4$.

\begin{theorem}\label{thm-2060}
Let $m \geq 2$ be an integer and $n=q^m-1$, where $q \geq 4$ is even. Then the repeated-root cyclic code
$\bC(\bC_{(q, n, 3,  0)}, \bC_{(q, n, 6,0)})$
has parameters $[2(q^m-1), k,  6]_q$,
where
\begin{itemize}
\item $k=2(q^m-1)-4m-2$  if $q=4$,  and
\item  $k=2(q^m-1)-5m-2$  if $q>4$,.
\end{itemize}
\end{theorem}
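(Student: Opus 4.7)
The strategy is to apply the generalized van Lint theorem (Theorem~\ref{thm-evanlint}) with $\bC_1=\bC_{(q,n,3,0)}$ and $\bC_2=\bC_{(q,n,6,0)}$. The containment $\bC_2\subseteq\bC_1$ is automatic from the defining sets, so the hypotheses are met. The proof of Theorem~\ref{thm-20302} already supplies all we need about $\bC_1$: generator polynomial $(x-1)\m_{\beta}(x)$ and parameters $[n,n-m-1,3]_q$. In particular $2d(\bC_1)=6$, which will determine the minimum distance of the output code once we know that $d(\bC_2)\geq 6$.

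The main work is to pin down $\bC_2$, whose defining set is the union of the $q$-cyclotomic cosets modulo $n$ meeting $\{0,1,2,3,4\}$. The case split in the statement reflects whether $4$ lies in $C_1^{(q,n)}$: since $4\equiv q\pmod n$ can hold only when $q=4$, for $q=4$ we have $C_4^{(q,n)}=C_1^{(q,n)}$ and the defining set is $\{0\}\cup C_1\cup C_2\cup C_3$, while for even $q\geq 8$ the five cosets $\{0\},C_1,C_2,C_3,C_4$ are pairwise disjoint. I plan to verify that $|C_i^{(q,n)}|=m$ for every relevant $i\in\{1,2,3,4\}$: for $i\in\{1,2,4\}$ this is immediate because $\gcd(i,n)=1$ (as $n$ is odd), giving $|C_i|=\ord_n(q)=m$; for $i=3$ the only nontrivial subcase is $\gcd(3,n)=3$, where $|C_3|=\ord_{n/3}(q)$ is a divisor of $m$ and any smaller divisor $\ell<m$ would force $(q^m-1)/3\mid q^\ell-1$, hence $q^m-1\leq 3(q^\ell-1)<3q^\ell$, which is incompatible with $\ell<m$ when $q\geq 4$. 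Combined with the BCH bound, this shows that $\bC_2$ has generator polynomial $(x-1)\m_{\beta}(x)\m_{\beta^2}(x)\m_{\beta^3}(x)$ (multiplied by $\m_{\beta^4}(x)$ when $q\geq 8$), dimension $n-3m-1$ or $n-4m-1$ accordingly, and $d(\bC_2)\geq 6$.

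Writing the generator of $\bC_2$ as $g_1(x)g_2(x)$ with $g_1(x)=(x-1)\m_{\beta}(x)$, the polynomial $g_2(x)$ consists of irreducible factors distinct from those of $g_1(x)$, so $\gcd(g_1(x),g_2(x))=1$. Feeding this into Theorem~\ref{thm-evanlint} yields the repeated-root generator $(x-1)^2\m_{\beta}(x)^2\m_{\beta^2}(x)\m_{\beta^3}(x)$ (times $\m_{\beta^4}(x)$ when $q\geq 8$), the claimed dimension $2(q^m-1)-2(m+1)-\deg g_2(x)$ in each case, and the minimum distance $\min\{2d(\bC_1),d(\bC_2)\}=\min\{6,d(\bC_2)\}=6$. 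The chief technical obstacle is the $i=3$ cyclotomic coset calculation above; once that is handled, every other step is a routine application of the tools already developed in Theorems~\ref{thm-20302} and~\ref{thm-2040}.
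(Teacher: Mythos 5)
Your proposal is correct and follows essentially the same route as the paper's proof: invoke the parameters of $\bC_{(q,n,3,0)}$ from the proof of Theorem \ref{thm-20302}, determine the generator polynomial and dimension of $\bC_{(q,n,6,0)}$ by noting that $C_4^{(q,n)}=C_1^{(q,n)}$ exactly when $q=4$, check the coprimality of $g_1$ and $g_2$, and apply Theorem \ref{thm-evanlint}. The only difference is that you spell out the cyclotomic-coset size computation (in particular the $\gcd(3,n)=3$ subcase) that the paper dismisses with ``it can be verified.''
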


\begin{proof}
{\em
By definition, the BCH code $\bC_{(q, n, 3,  0)}$ has generator polynomial $(x-1)\m_{\beta}(x)$.
It was shown in the proof of Theorem \ref{thm-20302} that
$\bC_{(q, n, 3,  0)}$ has parameters $[n, n-m-1, 3]_q$. \\

It follows from the BCH bound that $d(\bC_{(q, n, 6,0)}) \geq 6.$ We now determine the generator polynomial
and the dimension of the code $\bC_{(q, n, 6,0)}$.  When $q=4$, the generator polynomial of the code
$\bC_{(4, n, 6,0)}$ is $(x-1)\m_{\beta}(x) \m_{\beta^2}(x) \m_{\beta^3}(x)$. It can be verified that
$$
|C_1^{(4,n)}|=|C_2^{(4,n)}|=|C_3^{(4,n)}|=m.
$$
Therefore,  $\dim(\bC_{(4, n, 6,0)})=n-3m-1$.
When $q>4$, the generator polynomial of the code
$\bC_{(q, n, 6,0)}$ is $(x-1)\m_{\beta}(x) \m_{\beta^2}(x) \m_{\beta^3}(x)  \m_{\beta^4}(x)$. It can be verified that
$$
|C_1^{(q,n)}|=|C_2^{(q,n)}|=|C_3^{(q,n)}|=|C_4^{(q,n)}|=m.
$$
Therefore,  $\dim(\bC_{(q, n, 6,0)})=n-4m-1$.

Since $\gcd(n, q)=1$, $x^n-1$ has no repeated roots in $\bF_{q^m}$.
We have then
$$\gcd((x-1)\m_{\beta}(x),  \m_{\beta^2}(x) \m_{\beta^3}(x))=1$$ if $q=4$
and
$$\gcd((x-1)\m_{\beta}(x),  \m_{\beta^2}(x) \m_{\beta^3}(x) \m_{\beta^4}(x))=1$$
if $q>4$ .
The desired conclusions then follow from Theorem \ref{thm-evanlint}.
This completes the proof.
}
\end{proof}

\begin{example}
The following is a list of  codes in the family of cyclic codes $\bC(\bC_{(q, n, 3,  0)}, \bC_{(q, n, 6,0)})$ for $q \geq 4$.
\begin{itemize}
\item When $(q,m)=(4,2)$,  the cyclic code $\bC(\bC_{(q, n, 3,  0)}, \bC_{(q, n, 6,0)})$ had parameters
$[30,20,6 ]_4$.
Notice that the best linear code known with parameters $[30,20,6]_4$ in \cite{Grassl} is not known to be cyclic.

\item When $(q,m)=(4,3)$,  the cyclic code $\bC(\bC_{(q, n, 3,  0)}, \bC_{(q, n, 6,0)})$ had parameters
$[126,112,6 ]_4$.
Notice that the best linear code known with parameters $[126,112,6]_4$ in \cite{Grassl} is not known to be cyclic.

\item When $(q,m)=(8,2)$,  the cyclic code $\bC(\bC_{(q, n, 3,  0)}, \bC_{(q, n, 6,0)})$ had parameters
$[126,114,6 ]_8$.
Notice that the best linear code known with parameters $[126,114,7]_8$ in \cite{Grassl} is not known to be cyclic.
\end{itemize}

\end{example}

To the best knowledge of the authors, no infinite family of linear codes with the same length and dimension better than $\bC(\bC_{(q, n, 3,  0)}, \bC_{(q, n, 6,0)})$ for $q\geq 4$ is reported in the literature.
The following theorem describes an infinite family of repeated-root cyclic codes over $\bF_4$.

\begin{theorem}\label{thm-3160h}
Let $m \geq 2$ be an integer and $n=2^{2m-1}-1$. Then the repeated-root cyclic code
$\bC(\bC_{(4,n,3,1)}, \bC_{(4,n,6,0)})$ has parameters $[2n, 2n-6m+2, 6]_4$.
\end{theorem}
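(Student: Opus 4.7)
The plan is to apply Theorem \ref{thm-evanlint} with $\bC_1 = \bC_{(4,n,3,1)}$ and $\bC_2 = \bC_{(4,n,6,0)}$, writing the generator polynomial of $\bC_1$ as $g_1(x) = \m_\beta(x)$ and that of $\bC_2$ as $g_1(x) g_2(x)$ with $g_2(x) = (x-1)\m_{\beta^3}(x)$. To hit the claimed distance exactly, I want to prove $d(\bC_1) = 3$ and $d(\bC_2) \geq 6$, so that the van Lint formula $\min\{2d(\bC_1), d(\bC_2)\}$ evaluates to $6$.

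The first step is to sort out the $4$-cyclotomic cosets modulo $n = 2^{2m-1} - 1$. Since $\ord_n(2) = 2m-1$ is odd, we have $\ord_n(4) = 2m-1$, and the identity $4^m = 2^{2m} \equiv 2 \pmod{n}$ places $2$ in $C_1^{(4,n)}$; hence $C_1^{(4,n)} = \{2^j \bmod n : 0 \leq j \leq 2m-2\}$ has cardinality $2m-1$. From $2^{2m-1} \equiv -1 \pmod{3}$ we get $n \equiv 1 \pmod{3}$, so $\gcd(3,n) = 1$ and $|C_3^{(4,n)}| = 2m-1$. The elements $1, 2, 4, \ldots, 2^{2m-2}$ of $C_1^{(4,n)}$ are all strictly less than $n$ for $m \geq 2$, and $3$ is not a power of $2$, so $3 \notin C_1^{(4,n)}$. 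Therefore $C_1^{(4,n)} \neq C_3^{(4,n)}$ and $\gcd\bigl(\m_\beta(x),\,(x-1)\m_{\beta^3}(x)\bigr) = 1$.

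Next I would analyze the two building blocks. The code $\bC_1$ has generator polynomial $\m_\beta(x)$ of degree $2m-1$ and thus dimension $n - 2m + 1$; the BCH bound yields $d(\bC_1) \geq 3$ from the consecutive roots $\beta,\beta^2$. For the matching upper bound, I would use that $\ord_n(2) = \ord_n(4) = 2m-1$, so the $\bF_2$-minimal polynomial of $\beta$ coincides with $\m_\beta(x)$ as a polynomial. Hence $\bC_{(2,n,3,1)}$ shares this generator polynomial with $\bC_1$ and embeds as an $\bF_2$-subspace of $\bC_1$ inside $\bF_4^n$. But $\bC_{(2,n,3,1)}$ is precisely the $[n, n-2m+1, 3]_2$ binary Hamming code (since $n = 2^{2m-1}-1$), so it donates a weight-$3$ codeword to $\bC_1$, forcing $d(\bC_1) = 3$. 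The code $\bC_2$ has generator polynomial $(x-1)\m_\beta(x)\m_{\beta^3}(x)$ of degree $1 + 2(2m-1) = 4m-1$ and thus dimension $n - 4m + 1$; since $\beta^2$ and $\beta^4$ are roots of $\m_\beta(x)$, the five consecutive powers $1, \beta, \beta^2, \beta^3, \beta^4$ are all roots of the generator, so the BCH bound gives $d(\bC_2) \geq 6$.

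The remaining hypotheses of Theorem \ref{thm-evanlint} are immediate: $g_2(x)$ divides $x^n - 1$ (each of its irreducible factors does), and the coprimality $\gcd(g_1,g_2) = 1$ was established above. The theorem then produces the cyclic code $\bC(\bC_1, \bC_2)$ with generator polynomial $\m_\beta(x)^2 (x-1)\m_{\beta^3}(x)$ of degree $6m-2$, dimension $2n - 2(2m-1) - 2m + 0 = 2n - 6m + 2$, and minimum distance $\min\{2\cdot 3,\, d(\bC_2)\} = 6$, exactly as claimed. The main obstacle in the plan is the tight bound $d(\bC_1) = 3$, since the BCH bound alone supplies only $d(\bC_1) \geq 3$; this is resolved cleanly by the Hamming-subcode embedding, which is available precisely because $n$ has the Hamming form $2^k - 1$ with $k = 2m - 1$.
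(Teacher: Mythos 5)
Your proposal is correct, and it reaches the conclusion by the same overall route as the paper (compute the $4$-cyclotomic cosets $C_1^{(4,n)}$ and $C_3^{(4,n)}$, identify the generator polynomials and dimensions of the two BCH building blocks, check coprimality, and apply Theorem \ref{thm-evanlint}), but it handles the one delicate step --- showing $d(\bC_{(4,n,3,1)})=3$ exactly rather than merely $\geq 3$ --- in a genuinely different way. The paper argues $d\leq 4$ from the sphere-packing bound and then excludes $d=4$ by Lemma \ref{lem-fubound}; you instead observe that, because $\ord_n(2)=\ord_n(4)=2m-1$, the polynomial $\m_{\beta}(x)$ lies in $\bF_2[x]$ and generates the binary Hamming code of length $n=2^{2m-1}-1$, whose weight-$3$ codewords lie in $\bC_{(4,n,3,1)}$. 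Your argument is not only cleaner but more robust: for the code $[n,n-2m+1,d]_4$ the inequality of Lemma \ref{lem-fubound} with $d=4$ reads $1+3(n-1)\leq 4^{2m-2}$, which fails only when $m=2$ (e.g.\ for $m=3$ one gets $91\leq 256$), so the paper's exclusion of $d=4$ via that lemma is only justified for $m=2$, whereas your Hamming-subcode embedding settles $d(\bC_{(4,n,3,1)})=3$ uniformly for all $m\geq 2$. All remaining details in your write-up (the coset sizes $|C_1^{(4,n)}|=|C_3^{(4,n)}|=2m-1$, the disjointness $3\notin C_1^{(4,n)}$, the five consecutive roots giving $d(\bC_{(4,n,6,0)})\geq 6$, and the dimension count $2n-6m+2$) check out.
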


\begin{proof}
{\em
By definition, the BCH code $\bC_{(4, n, 3,  1)}$ has generator polynomial $\m_{\beta}(x)$.
It can be verified that
$$
|C_1^{(4,n)}|=2m-1.
$$
Consequently,
$$
\dim(\bC_{(4, n, 3,  1)})=n-2m+1.
$$
It follows from the BCH bound that $d(\bC_{(4, n, 3,  1)}) \geq 3$.
It then follows from the sphere packing bound  that
$d(\bC_{(4, n, 3, 1)})  \leq 4$.
It  follows from the bound of  Lemma \ref{lem-fubound} that
$d(\bC_{(4, n, 3, 1)})  \neq 4$.
Consequently,
$d(\bC_{(4, n, 3, 1)})  = 3$ and the code  $\bC_{(4, n, 3,  1)}$ has parameters $[n, n-2m+1, 3]_4$.

Since $4^{2m}-2=2(2^{2m-1}-1)$,  $2$ is in the $4$-cyclotomic coset $C_1^{(4,n)}$.
By definition, the BCH code $\bC_{(4, n, 6,  0)}$ has generator polynomial
$(x-1)\m_{\beta}(x)\m_{\beta^3}(x)$.
It can be verified that
$$
|C_3^{(4,n)}|=2m-1.
$$
Consequently,
$$
\dim(\bC_{(4, n, 6,  0)})=n-4m+1
$$
It follows from the BCH bound that
$
d(\bC_{(4, n, 6,  0)}) \geq 6.
$
Note that $\gcd(\m_{\beta}(x), (x-1)m_{\beta^3}(x))=1$. The desired conclusions then follow from Theorem \ref{thm-evanlint}.
}
\end{proof}

\begin{example}
The first three codes in the family of codes $\bC(\bC_{(4,n,3,1)}, \bC_{(4,n,6,0)})$ are listed below.
\begin{itemize}
\item When $m= 2$, the code $\bC(\bC_{(4,n,3,1)}, \bC_{(4,n,6,0)})$ has parameters $[14,4,6  ]_4$.
Notice that the best linear code known with parameters $[14,4,9]_4$ in \cite{Grassl} is not known to be cyclic.
\item When $m= 3$, the code $\bC(\bC_{(4,n,3,1)}, \bC_{(4,n,6,0)})$ has parameters $[62,46,6  ]_4$.
Notice that the best linear code known with parameters $[62,46,8]_4$ in \cite{Grassl} is not known to be cyclic.
\item When $m= 4$, the code $\bC(\bC_{(4,n,3,1)}, \bC_{(4,n,6,0)})$ has parameters $[254,232,6  ]_4$.
Notice that the best linear code known with parameters $[254,232,8]_4$ in \cite{Grassl} is not known to be cyclic.
\end{itemize}
\end{example}

To the best knowledge of the authors, no infinite family of linear codes with the same length and dimension better than $\bC(\bC_{(4,n,3,1)}, \bC_{(4,n,6,0)})$ is reported in the literature.

\section{Infinite families of repeated-root binary cyclic codes with large dimensions and large minimum distances}\label{sec-largek}

The following theorem describes an infinite family of repeated-root binary cyclic codes with parameters
 $[2n, n, d\geq \frac{n}{\log_2 n}]_2$.

\begin{theorem}\label{thm-new111}
Let $m \geq 3$ be a prime and $n=2^m-1$. Put $h=\frac{2^{m-1}-1}{m}$. Let $\delta_h$ denote the
$h$-th largest nonzero $2$-cyclotomic coset leader modulo $n$.
 Then the repeated-root cyclic code $\bC(\bC_{(2,n, \delta_h+2,1)},  \bC_{(2,n, \delta_h+2,0)})$
has parameters $[2n, n, d]_2$, where
\begin{equation}\label{eqn-sldbound}
d \geq \delta_h+ 3 \geq \frac{n-1+2m}{m}.
\end{equation}
\end{theorem}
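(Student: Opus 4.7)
The plan is to invoke the generalized van Lint theorem (Theorem \ref{thm-evanlint}) with the two BCH codes $\bC_1 = \bC_{(2,n,\delta_h+2,1)}$ and $\bC_2 = \bC_{(2,n,\delta_h+2,0)}$, chosen so that the generator of $\bC_2$ equals $(x-1)$ times the generator of $\bC_1$. This setup is engineered both to make the combined dimension land at exactly $n$ and to let $\bC_2$ inherit an even-weight improvement over the BCH bound.

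First I would gather two structural facts about $2$-cyclotomic cosets modulo $n = 2^m - 1$ when $m$ is prime. (a) Each nonzero coset has size exactly $m$, since the size divides $\ord_n(2) = m$ and $m$ is prime. (b) Every nonzero coset leader is odd: if $2a \in [1,n-1]$ is even, then $a$ lies in the coset of $2a$ and $a < 2a$, so $2a$ is not a leader. Fact (b) is the engine of the whole construction: it forces $\delta_h + 1$ to be even and hence not a coset leader, so the defining set of $\bC_1$ is the union of precisely the $h$ cosets whose leaders lie in $[1,\delta_h]$. Combined with (a) this yields $\deg(g_{\bC_1}) = hm = 2^{m-1} - 1$, and then $g_{\bC_2} = (x-1)\, g_{\bC_1}$. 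Thus $\bC_2 \subseteq \bC_1$ fits into Theorem \ref{thm-evanlint} with $g_1 := g_{\bC_1}$, $g_2 := x-1$, and $\gcd(g_1, g_2) = 1$.

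Next I would read off the parameters. The dimension formula of Theorem \ref{thm-evanlint} gives $\dim \bC(\bC_1,\bC_2) = 2n - 2(2^{m-1} - 1) - 1 = n$, matching the claim. For the minimum distance, the BCH bound gives $d(\bC_i) \geq \delta_h + 2$ for $i\in\{1,2\}$. The factor $x-1$ in $g_{\bC_2}$ forces all codewords of $\bC_2$ to have even Hamming weight, and since $\delta_h$ is odd by (b), the bound $d(\bC_2) \geq \delta_h + 2$ sharpens to $d(\bC_2) \geq \delta_h + 3$. The minimum-distance formula of Theorem \ref{thm-evanlint} then yields $d(\bC(\bC_1,\bC_2)) = \min\{2 d(\bC_1),\, d(\bC_2)\} \geq \delta_h + 3$, since $2(\delta_h + 2) > \delta_h + 3$ for $\delta_h \geq 1$.

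Finally I would verify the numerical inequality $\delta_h + 3 \geq (n - 1 + 2m)/m$, equivalently $\delta_h \geq 2h - 1$. Here fact (b) pays off a second time: only odd integers can be nonzero coset leaders, and the odd integers in $[1, 2h - 2]$ number only $h - 1$, so at most $h - 1$ coset leaders can lie strictly below $2h - 1$, forcing $\delta_h \geq 2h - 1$. The only step that I expect to require real care is the parity observation in (b) together with its consequence that $\delta_h + 1$ is not a coset leader; without this the defining sets of $\bC_1$ and $\bC_2$ would differ by more than one cyclotomic coset and the combined dimension would fail to land at exactly $n$. Everything else is a routine combination of the BCH bound, Theorem \ref{thm-evanlint}, and counting of odd integers.
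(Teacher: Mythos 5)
Your proof is correct and follows the same skeleton as the paper's: the same pair of BCH codes, the same facts that every nonzero $2$-cyclotomic coset has size $m$ and every nonzero coset leader is odd, and the same application of Theorem \ref{thm-evanlint}, with the final inequality $\delta_h\geq 2h-1$ obtained by the same count of odd integers. The one step you handle genuinely differently is the bound $d(\bC_{(2,n,\delta_h+2,0)})\geq \delta_h+3$. The paper obtains it by invoking the affine-invariance of the extended narrow-sense primitive BCH code, which forces $d(\bC_{(2,n,\delta_h+2,1)})$ to be odd, and then passing to its even-weight subcode; you instead apply the BCH bound directly to $\bC_{(2,n,\delta_h+2,0)}$ (its defining set contains the $\delta_h+1$ consecutive elements $0,1,\dots,\delta_h$, giving $d\geq\delta_h+2$) and note that an even-weight binary code cannot attain the odd value $\delta_h+2$. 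Your route is more elementary and self-contained, needing only the already-established oddness of $\delta_h$ rather than the affine-invariance machinery; the paper's route would in principle yield the slightly stronger $d(\bC_2)\geq d(\bC_1)+1$, but for the stated bound the two are interchangeable. One remark on the point you rightly flag as delicate: your count of exactly $h$ cosets with leader in $[1,\delta_h]$ (hence $\deg g_1=hm=2^{m-1}-1$ and the dimension landing at $n$) tacitly reads $\delta_h$ as the $h$-th \emph{smallest} nonzero coset leader, whereas the theorem statement says ``largest'', under which reading the count would be $h+1$ (there are $2h$ nonzero cosets in all) and the dimension would not come out to $n$. The paper's own proof makes the same reading as you do --- it is what the examples such as $m=3$, $\delta_h=1$ require --- so this is a defect of the statement rather than of your argument, but it is worth noting explicitly.
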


\begin{proof}
{\em
By definition, each nonzero $2$-cyclotomic coset leader modulo $n$ must be a positive odd integer. Consequently,
$$
\delta_h \geq 2h-1=\frac{n-1}{m} - 1.
$$
It follows from the BCH bound that
$$
d(\bC_{(2,n, \delta_h+2,1)}) \geq \delta_h +2.
$$
Since $m$ is a prime, each nonzero $2$-cyclotomic coset must have cardinality $m$. Consequently,
$$
\dim(\bC_{(2,n, \delta_h+2,1)})=2^{m-1}.
$$
Let $g(x)$ denote the generator polynomial of $\bC_{(2,n, \delta_h+2,1)}$. By definition,
$\bC_{(2,n, \delta_h+2,0)}$ has generator polynomial $(x-1)g(x)$, and is thus the even-weight
subcode of $\bC_{(2,n, \delta_h+2,1)}$.  As a result,
$$
\dim(\bC_{(2,n, \delta_h+2,0)})=2^{m-1}-1.
$$
It is known that the extended code of $\bC_{(2,n, \delta_h+2,1)}$ is affine-invariant. Hence, the minimum distance
of $\bC_{(2,n, \delta_h+2,1)}$ must be odd. Consequently,
$$
d(\bC_{(2,n, \delta_h+2,0)}) \geq d(\bC_{(2,n, \delta_h+2,1)}) + 1 \geq \delta_h +3.
$$
It then follows from Theorem \ref{thm-evanlint} that
\begin{eqnarray*}
&& d(\bC(\bC_{(2,n, \delta_h+2,1)},  \bC_{(2,n, \delta_h+2,0)})) \\
&& \ \ = \min\{ 2d(\bC_{(2,n, \delta_h+2,1)}),  d(\bC_{(2,n, \delta_h+2,0)})  \}  \\
&& \ \  \geq \delta_h + 3 \\
& & \ \  \geq  \frac{n-1+2m}{m}
\end{eqnarray*}
and
$$
 \dim(\bC(\bC_{(2,n, \delta_h+2,1)},  \bC_{(2,n, \delta_h+2,0)}))=2^{m-1}+2^{m-1}-1=n.
$$
Furthermore, the code $\bC(\bC_{(2,n, \delta_h+2,1)},  \bC_{(2,n, \delta_h+2,0)})$
has generator polynomial $(x-1)g(x)^2$.
This completes the proof.
}
\end{proof}

The code $\bC_{(2,n, \delta_h+2,1)}$ was proposed and studied in \cite{SunLiDing}, where
a much better lower bound on $d((\bC_{(2,n, \delta_h+2,1)})$ was developed.  Specifically,
the following result on $\delta_h$ was proved in  \cite{SunLiDing}:
\begin{itemize}
\item When $m \in \{, 3,5\}$,  $\delta_h=\frac{2^m-2}{m}-1$.
\item When $m=7$,  $\delta_h=\frac{2^m-2}{m}+1$.
\item When $m \geq 11$ is a prime,  $\delta_h \geq  \frac{2^m-2}{m}-1+2\Lambda $, where
\begin{align*}
\Lambda=\sum_{i=2}^{\lfloor\frac{m}{\upsilon(m)+2}\rfloor} \frac{(-1)^{i}}{i}\binom{m-i(\upsilon(m)+1)-1}{i-1}2^{m-i(\upsilon(m)+2)}
+\left\lfloor \frac{2^{m-\upsilon(m)-2}-2^{\upsilon(m)}}{2^{\upsilon(m)+1}-1} \right\rfloor+1, 	
\end{align*}
and $\upsilon(m)= \lfloor \log_2 m \rfloor$.
\end{itemize}
Using the better lower bound on $\delta_h$ above,  the lower bound in \eqref{eqn-sldbound} on
$$ d(\bC(\bC_{(2,n, \delta_h+2,1)},  \bC_{(2,n, \delta_h+2,0)}))$$ can be improved to a large extent
for $m \geq 11$.

\begin{example}
When $m=3$,  then the cyclic code $\bC(\bC_{(2,n, \delta_h+2,1)},  \bC_{(2,n, \delta_h+2,0)})$ has parameters
$[14,7,4]_2$.
This code is distance-optimal according to \cite{Grassl}, where the linear code with the same parameters is not known
to be cyclic.
\end{example}

\begin{example}
When $m=5$,  then the cyclic code $\bC(\bC_{(2,n, \delta_h+2,1)},  \bC_{(2,n, \delta_h+2,0)})$ has parameters
$[62,31,8 ]_2$.
The best linear code with parameters $[62,31,12]$ in \cite{Grassl} is not known to be cyclic.
 \end{example}

The following theorem presents another infinite family of repeated-root binary cyclic codes.

\begin{theorem}\label{thm-new222}
Let $m \geq 3$ be a prime and $n=2^m-1$. Put
$$
h_2=\frac{2^{m-1}-1}{m} \mbox{ and } h_1=\frac{h_2+1}{2}=\frac{2^{m-1}-1 +m}{2m}.
$$
Let $\delta_{h_i}$  denote the
$h_i$-th largest nonzero $2$-cyclotomic coset leader modulo $n$.
Then the cyclic code $\bC(\bC_{(2,n, \delta_{h_1},1)},  \bC_{(2,n, \delta_{h_2}+2,1)})$
has parameters $[2n, n+2^{m-2} + (m+1)/2, d]_2$, where
\begin{equation}\label{eqn-sldbound2}
d  \geq \frac{n-1}{m}.
\end{equation}
\end{theorem}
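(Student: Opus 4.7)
The plan is to apply Theorem~\ref{thm-evanlint} with $\bC_1 = \bC_{(2,n,\delta_{h_1},1)}$ and $\bC_2 = \bC_{(2,n,\delta_{h_2}+2,1)}$, mirroring the template of Theorem~\ref{thm-new111}. Because $m$ is prime, every nonzero $2$-cyclotomic coset modulo $n=2^m-1$ has cardinality $m$ and every nonzero coset leader is a positive odd integer (if $a>0$ were an even leader then $a\cdot 2^{m-1}\equiv a/2 \pmod n$ would belong to the same coset and be strictly smaller). Hence $\delta_h \geq 2h-1$, the total number of nonzero coset leaders equals $(n-1)/m = 2h_2$, and $h_2$ is odd so $h_1=(h_2+1)/2$ is an integer. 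Since $h_1 \leq h_2$ the defining set of $\bC_1$ is contained in that of $\bC_2$, yielding the inclusion $\bC_2\subseteq\bC_1$ required by Theorem~\ref{thm-evanlint}.

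For dimensions, the defining set of $\bC_2$ contains exactly $h_2$ nonzero cyclotomic cosets (those with leader $\leq\delta_{h_2}$; note that $\delta_{h_2}+1$ is even and therefore not a leader, so it contributes no new coset), so $\dim(\bC_2) = n - mh_2 = n-(2^{m-1}-1) = 2^{m-1}$. Similarly the defining set of $\bC_1$ contains exactly $h_1-1$ cosets, giving $\dim(\bC_1)=n-m(h_1-1)$. Since $x^n-1$ is squarefree over $\bF_{2^m}$ the relevant $\gcd$ appearing in Theorem~\ref{thm-evanlint} equals $1$, whence
\[
\dim(\bC(\bC_1,\bC_2)) = \dim(\bC_1)+\dim(\bC_2) = 2n - m(h_1-1) - mh_2.
\]
Substituting $h_1=(h_2+1)/2$ and $mh_2 = 2^{m-1}-1$ and simplifying produces the claimed dimension $n + 2^{m-2} + (m+1)/2$.

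For the minimum distance, the BCH bound gives $d(\bC_1)\geq \delta_{h_1}$ and $d(\bC_2)\geq \delta_{h_2}+2$. Combined with $\delta_h\geq 2h-1$ this yields
\[
2d(\bC_1) \geq 2(2h_1-1) = 2h_2 = \frac{n-1}{m}, \qquad d(\bC_2) \geq 2h_2+1 = \frac{n-1}{m}+1,
\]
so the distance formula of Theorem~\ref{thm-evanlint} gives $d(\bC(\bC_1,\bC_2)) = \min\{2d(\bC_1),d(\bC_2)\} \geq (n-1)/m$, as required. The only genuinely delicate parts are the parity bookkeeping ensuring that $\delta_{h_2}+1$ introduces no new cyclotomic coset into $\bC_2$'s defining set, and the algebraic identity $2h_1-1 = h_2$ that makes the two branches of the minimum-distance bound coincide up to a harmless $+1$; otherwise the argument is a direct extension of Theorem~\ref{thm-new111}, trading designed distance in $\bC_1$ for a substantial gain in dimension of the resulting repeated-root cyclic code.
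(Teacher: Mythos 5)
Your proposal is correct and follows essentially the same route as the paper's own proof: bounding $\delta_{h_i}\geq 2h_i-1$ via the oddness of coset leaders, computing the two dimensions from the fact that every nonzero coset has size $m$, and feeding the BCH bounds into Theorem~\ref{thm-evanlint}. If anything, your version is slightly more careful than the paper's, since you justify explicitly why the defining sets contain exactly $h_2$ and $h_1-1$ cosets and why $h_1$ is an integer.
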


\begin{proof}
{\em
By definition, each nonzero $2$-cyclotomic coset leader modulo $n$ must be a positive odd integer. Consequently,
$$
\delta_{h_2} \geq 2h_2-1=\frac{n-1}{m} - 1
$$
and
$$
\delta_{h_1} \geq 2h_1-1=\frac{n-1}{2m} .
$$
It follows from the BCH bound that
$$
d(\bC_{(2,n, \delta_{h_2}+2,1)}) \geq \delta_{h_2} +2 \geq \frac{n-1+m}{m}
$$
and
$$
d(\bC_{(2,n, \delta_{h_1},1)}) \geq \delta_{h_1}  \geq \frac{n-1}{2m}
$$
Since $m$ is a prime, each nonzero $2$-cyclotomic coset must have cardinality $m$. Consequently,
$$
\dim(\bC_{(2,n, \delta_{h_2}+2,1)})=2^{m-1}.
$$
and
$$
\dim(\bC_{(2,n, \delta_{h_1},1)})=n-(h_1-1)m =n- \frac{2^{m-1}-1-m}{2}.
$$
Since $h_1 = (h_2 +1)/2$,  $\bC_{(2,n, \delta_{h_2}+2,1)}$is a
subcode of $\bC_{(2,n, \delta_{h_1},1)}$.
Since $\gcd(2, n)=1$, $x^n-1$ has no repeated roots.
It then follows from Theorem \ref{thm-evanlint} that
\begin{eqnarray*}
&& \dim(\bC(\bC_{(2,n, \delta_{h_1},1)},  \bC_{(2,n, \delta_{h_2}+2,1)})) \\
&& \ \ =  \dim(\bC_{(2,n, \delta_{h_1},1)}) + \dim( \bC_{(2,n, \delta_{h_2}+2,1)}) \\
&& \ \ =n+2^{m-2} + (m+1)/2
\end{eqnarray*}
and
\begin{eqnarray*}
&& d(\bC(\bC_{(2,n, \delta_{h_1},1)},  \bC_{(2,n, \delta_{h_2}+2,1)})) \\
&& \ \ = \min\{ 2d(\bC_{(2,n, \delta_{h_1},1)}),  d(\bC_{(2,n, \delta_h+2,1)})  \}  \\
& & \ \  \geq  \frac{n-1}{m}.
\end{eqnarray*}
}
\end{proof}

\begin{example}
When $m=3$,  then the cyclic code $\bC(\bC_{(2,n, \delta_{h_1},1)},  \bC_{(2,n, \delta_{h_2}+2,1)})$ has parameters
$[14,11,2]_2$.
This code is distance-optimal according to \cite{Grassl}, where the linear code with the same parameters is not known
to be cyclic.
\end{example}

\begin{example}
When $m=5$,  then the cyclic code  $\bC(\bC_{(2,n, \delta_{h_1},1)},  \bC_{(2,n, \delta_{h_2}+2,1)})$ has parameters
$[62,42,6 ]_2$.
The best linear code with parameters $[62,42,8]$ in \cite{Grassl} is not known to be cyclic.
 \end{example}

 To the best knowledge of the authors, no infinite family of binary linear codes with the same lengths  and
 dimensions better than any of the two infinite families of cyclic cods is reported in the literature.

\section{Summary of contributions and concluding remarks}\label{sec-fin}

The contributions of this paper are summarized as follows.
\begin{itemize}
\item An infinite family of distance-optimal repeated-root cyclic codes with parameters
  $[2(q^m-1), 2(q^m-1)-m-2,  3]_q$ was constructed (see Theorem \ref{thm-20302}), where $m \geq 2$ and $q \geq 4$ is even.

\item An infinite family of distance-optimal repeated-root cyclic codes with parameters
  $[2(q^m-1), 2(q^m-1)-2m-2,  4]_q$  was constructed (see Theorem 3,3), when $m \geq 2$ and $q \geq 4$ is even.

\item An infinite family of repeated-root cyclic codes with parameters
  $[2n, 2n-m-2,  3 \leq d \leq 4]_q$  was constructed (see Theorem \ref{thm-20301p}), where $n=(q^m-1)/(q-1)$, $m \geq 2$ and $q \geq 4$ is even.
No infinite family of linear codes over $\bF_q$ with the same length and dimension but better minimum distance is reported in the literature.

\item An infinite family of repeated-root cyclic codes with parameters
  $[2n, 2n-2m-2,  4]_q$  was constructed (see Theorem \ref{thm-2040p}), where $n=(q^m-1)/(q-1)$, $m \geq 2$ and $q \geq 4$ is even.
No infinite family of linear codes over $\bF_q$ with the same length and dimension but better minimum distance is reported in the literature.

\item An infinite family of distance-optimal repeated-root cyclic codesover $\bF_4$  with parameters
$[2(2^{2m-1}-1), 2(2^{2m-1}-1)-2m-1, 4]_4$   was constructed (see Theorem \ref{thm-2040h}), where $m \geq 2$.

\item An infinite family of repeated-root cyclic codes with parameters
  $[2(2^m-1), 2(2^m-1)-3m-1,  6]_2$  was constructed (see Theorem \ref{thm-3160}), where $m \geq 3$.
No infinite family of linear codes over $\bF_2$ with the same length and dimension but better minimum distance is reported in the literature.

\item An infinite family of repeated-root cyclic codes with parameters
  $[2(2^m-1), 2(2^m-1)-4m-2,  8]_2$  was constructed (see Theorem \ref{thm-3080}), where $m \geq 5$.
No infinite family of linear codes over $\bF_2$ with the same length and dimension but better minimum distance is reported in the literature.

\item Three infinite families of binary repeated-root cyclic codes with minimum distances $8$ or at least $10$ were constructed (see Theorems \ref{thm-281}, \ref{thm-282} and \ref{thm-210}).  To the best knowledge of
the authors, no infinite family of binary cyclic codes with the same lengths and dimensions better than any of the three infinite families of the binary cyclic codes is known.

\item An infinite family of repeated-root cyclic codes with parameters
  $[2(q^m-1), k,  6]_q$ was constructed (see Theorem \ref{thm-2060}), where $m \geq 2$ and
\begin{itemize}
\item $k=2(q^m-1)-4m-2$  if $q=4$,  and
\item  $k=2(q^m-1)-5m-2$  if $q>4$ is even.
\end{itemize}
No infinite family of linear codes over $\bF_q$ with the same length and dimension but better minimum distance is reported in the literature.

\item An infinite family of repeated-root cyclic codes over $\bF_4$ with parameters
$[2(2^{2m-1}-1), 2(2^{2m-1}-1)-6m+2, 6]_4$   was constructed (see Theorem \ref{thm-3160h}), where $m \geq 2$.
No infinite family of linear codes over $\bF_4$ with the same length and dimension but better minimum distance is reported in the literature.

\item An infinite family of repeated-root cyclic codes with parameters
 $[2(2^m-1), 2^m-1, d]_2$    was constructed (see Theorem \ref{thm-new111}), where
  $$
d  \geq \frac{n-1+2m}{m}
  $$
and
$m \geq 3$ is a prime.
    The lower bound on the minimum distance is much better than the square-root bound.
No infinite family of linear codes over $\bF_2$ with the same length and dimension but better minimum distance is reported in the literature.

\item An infinite family of repeated-root cyclic codes with parameters
$$
[2(2^m-1), 2^m-1+2^{m-2} + (m+1)/2, d \geq (2^m-2)/m]_2
$$
    was constructed (see Theorem \ref{thm-new222}), where $m \geq 3$ is a prime.
    The lower bound on the minimum distance is much better than the square-root bound.
No infinite family of linear codes over $\bF_2$ with the same length and dimension but better minimum distance is reported in the literature.
\end{itemize}
In summary, three infinite families of distance-optimal repeated-root cyclic codes over $\bF_q$ for
even $q$ were constructed in this paper.
In addition,  27 repeated-root cyclic codes of length up to $254$ over $\bF_q$ for $q \in \{2, 4, 8\}$
with optimal parameters or best parameters known were obtained in this paper (see Table
\ref{tab:A-q-5-3}).  Notice that some of the binary codes in Table \ref{tab:A-q-5-3} may be the same
as those in \citep{HSZ,Lint1,Massey}. \\

The results of this paper demonstrate that there are infinite families of repeated-root cyclic codes
over $\bF_q$ with minimum distance 3 or 4 for each even $q$.  Several families of distance-optimal
repeated-root binary cyclic codes with minimum distance 4 and several families of distance-optimal
repeated-root $p$-ary cyclic codes with minimum distance 3 for odd prime $p$ were constructed in
\cite{HSZ}. However,
it seems difficult to construct an
infinite family of distance-optimal repeated-root cyclic codes over small finite fields with minimum
distance 6 or more.  This is also true for the construction of  distance-optimal linear codes over small finite fields.
Although the theory and practice of cyclic codes have been extensively developed, repeated-root cyclic codes are less studied and understood. 	Further research into repeated-root cyclic codes would be necessary and interesting.  \\

Finally,  we would point out that all the repeated-root cyclic codes constructed in this paper are built
on BCH cyclic codes. This may explain why they are either distance-optimal or have the best parameters known compared with other infinite families of linear codes.  Further information on BCH codes could be
found in \citep{Ding4,Ding1}. \\

\begin{longtable}{|l|l|l|}
\caption{\label{tab:A-q-5-3} Repeated-root cyclic codes in this paper.}\\ \hline
$q$&Cyclic code &Optimality \\ \hline
$2$ &$  [14,7,4]_2 $& Optimal \\ \hline
$2$ &$ [14,11,2]_2  $& Optimal \\ \hline
$2$ &$  [30,14,8]_2 $& Optimal \\ \hline
$2$ &$  [30,17,6]_2 $& Best known \\ \hline
$2$ &$ [30,24,4]_2  $& Optimal \\ \hline
$2$ &$ [62,40,8]_2   $& Optimal \\ \hline
$2$ &$  [62,46,6]_2 $& Best known \\ \hline
$2$ &$ [62,55,4]_2  $& Optimal \\ \hline
$2$ &$  [126,100,8]_2 $& Best known \\ \hline
$2$ &$ [126,107,7]_2  $& Best known \\ \hline
$2$ &$ [126,118,4]_2  $& Optimal \\ \hline
$2$ &$ [254,224,8]_2  $& Best known \\ \hline
$2$ &$ [254,232,6]_2  $& Best known \\ \hline
$2$ &$ [254, 245, 4]_2  $& Optimal \\ \hline
$4$ &$[10,6,4]_4     $& Optimal \\ \hline
$4$ &$[14,9,4]_4     $& Optimal \\ \hline
$4$ &$[30,20,6]_4   $ &Best known  \\ \hline
$4$ &$[30,24,4]_4   $& Optimal \\ \hline
$4$ &$[42,37,3]_4   $& Optimal \\ \hline
$4$ &$[62,55,4]_4   $& Optimal \\ \hline
$4$ &$ [126,112,6]_4  $& Best known \\ \hline
$4$ &$[126,118,4]_4   $& Optimal \\ \hline
$4$ &$[126,121,3]_4   $& Optimal \\ \hline
$4$ &$ [170,164,3]_4  $& Optimal \\ \hline
$4$ &$ [254,245,4]_4  $& Optimal \\ \hline
$8$ &$[18,14,4]_8$   & Optimal \\ \hline
$8$ & $[126,120,4]_8$   & Optimal  \\ \hline
$8$ &$[126,122,3]_8   $& Optimal \\ \hline
\end{longtable}

\end{document}